\documentclass[11pt]{article}

\usepackage[T1]{fontenc}
\usepackage[margin=1in]{geometry}
\usepackage{amsmath,amssymb,amsthm,mathtools}
\usepackage{booktabs}
\usepackage{float}
\usepackage[vlined,linesnumbered,ruled]{algorithm2e}
\usepackage{microtype}
\usepackage{tabularx}
\usepackage[table]{xcolor}
\usepackage{soul}
\definecolor{addyellow}{rgb}{1,1,0.70}
\definecolor{delgreen}{rgb}{0.80,0.95,0.80}

\newsavebox{\ClaudeBox}

\soulregister{\cite}{7}
\soulregister{\ref}{7}
\soulregister{\eqref}{7}
\usepackage[hidelinks]{hyperref}

\newcolumntype{L}[1]{>{\raggedright\arraybackslash}p{#1}}
\newcolumntype{Y}{>{\raggedright\arraybackslash}X}

\newtheorem{theorem}{Theorem}[section]
\newtheorem{lemma}[theorem]{Lemma}
\newtheorem{fact}[theorem]{Fact}
\newtheorem{corollary}[theorem]{Corollary}
\theoremstyle{remark}
\newtheorem{remark}[theorem]{Remark}
\newtheorem*{thmA}{Theorem A (informal)}
\newtheorem*{thmB}{Theorem B (informal)}

\def\inline#1:{\par\vskip 7pt\noindent{\bf #1:}\hskip 10pt}
\def\SamplerBalance{\mbox{\sf Sampler\_Balance}}
\def\SlackGreedy{\mbox{\sf Slack\_Greedy}}
\def\CONGEST{\mbox{\tt CONGEST}}
\def\LOCAL{\mbox{\tt LOCAL}}
\def\CC{\mbox{\tt CC}}
\def\LOCAL{\mbox{\tt LOCAL}}
\def\cE{\mathcal{E}}
\def\cF{\mathcal{F}}
\def\cP{\mathcal{P}}
\def\clr{\phi}
\def\freq{{\tt f}}
\def\numclr{\chi}
\def\TVD{\mbox{\tt TVD}}
\def\Prob{{\mathbb P}}
\def\Exp{{\mathbb E}}
\soulregister{\CONGEST}{0}
\soulregister{\CC}{0}
\soulregister{\LOCAL}{0}
\soulregister{\SamplerBalance}{0}
\soulregister{\SlackGreedy}{0}
\soulregister{\TVD}{0}

\title{
Diameter-Free Distributed Frequency Control for Graph Coloring
\\
in the \CONGEST\ Model}

\author{Amit Nir\thanks{Weizmann Institute of Science.
E-mail: \texttt{\{amit.nir,david.peleg\}@weizmann.ac.il}.}
\and
David Peleg$^*$}

\date{\today}

\begin{document}
\maketitle

\begin{abstract}
This paper presents two randomized proper-coloring algorithms that control
color frequencies in the synchronous \CONGEST\ model without paying a
diameter-dependent coordination cost.  Let $\lambda\geq 1$ denote the desired
failure exponent.  For every fixed $\delta>0$, the first algorithm uses
$\numclr=\lceil(2+\delta)\Delta\rceil$ colors and, with probability at least
$1-n^{-\lambda}$, outputs a proper coloring that bounds the deviation of every color frequency from $n/\numclr$ by
$O_\delta\!\left(
 \sqrt{(\lambda+1)(n/\numclr)\lg n}+(\lambda+1)\lg n
\right).$
Under an explicit load condition, this additive guarantee yields two-sided
relative balance.  The second algorithm works with every
$\numclr>\Delta$ and gives a one-sided frequency cap controlled by the palette
slack $\numclr-\Delta$.  In particular, it uses
$\Delta+\lceil(\Delta+1)/\lceil\ln n\rceil\rceil$ colors and caps every used
color class by
$O((\lambda+1)(\sigma\lg^2 n+\lg n))$, where
$\sigma=n/(\Delta+1)$.
Both algorithms run in $O((\lambda+1)\lg n)$ rounds, with no dependence
on the network diameter; for the first algorithm, the multiplicative constant in the time bound depends on $\delta$.  
%whereas the second algorithm involves no parameter $\delta$ at all.
\end{abstract}

%%%%%%%%%%%%
\section{Introduction}
\label{sec:intro}

%%%%%%%%%%%%
\subsection{Background and Problem}

We study distributed algorithms for \emph{equitable graph coloring}. The problem is governed by two different requirements: \emph{properness} requires adjacent vertices to use different colors, which can be enforced by \emph{local} communication, while  \emph{balance} (or \emph{equitability}), requires that the \emph{frequency} of each color (i.e., the number of vertices assigned to it) be (almost) equal, which is inherently \emph{global}. The more relaxed \emph{near-equitable} coloring problem asks whether these conflicting requirements can be met simultaneously, and at what cost in palette size, frequency discrepancy, and running times.

To illustrate why exact balance is a global constraint, consider an anonymous
ring $C_n$ and any randomized $t$-round algorithm on it.  By symmetry, all
vertices share one output distribution, which properness forces to be
non-degenerate, and the outputs of any $\lfloor n/(2t+1)\rfloor$ vertices at
pairwise distance exceeding $2t$ are mutually independent, being functions of
disjoint portions of the randomness.  The number of such vertices receiving a
fixed color therefore fluctuates by order $\sqrt{n/t}$.  This suggests that
maintaining all frequencies within $\pm1$ of each other, as exact equitability
demands, requires coordination at diameter scale, whereas properness alone is
a purely local matter.  Accordingly, we relax exactness and quantify how
closely balance can be approximated without global coordination.  (For formal lower bounds see \cite{NirPeleg2026NearEquiLocal}.)

Let $G=(V,E)$ be a finite simple undirected graph.  Denote
$n=|V|$, $m=|E|$, and let $\Delta$ be the maximum degree.  We use $D$ to
denote the largest diameter of a connected component of $G$; thus $D$ remains
finite even when $G$ is disconnected.  The open neighborhood of a vertex
$v$ is denoted by $N(v)$, and $\deg(v)=|N(v)|$.

For an integer $\numclr\geq 1$, a \emph{$\numclr$-coloring} is a function
$\clr:V\to[\numclr]$, where $[\numclr]=\{1,\ldots,\numclr\}$ is termed the
\emph{palette}.  The coloring is \emph{proper} if
$\clr(u)\ne\clr(v)$ for every $uv\in E$.  For a color
$i\in[\numclr]$, its \emph{frequency} is
$\freq(i)=|\{v\in V\mid \clr(v)=i\}|$.
A color of positive frequency is termed a \emph{used color}.  Denote the
ideal frequency associated with the palette by
$\sigma_{\numclr}=n/\numclr$, and put $\sigma=n/(\Delta+1)$.

We distinguish three balance notions.  A proper coloring is
\emph{equitable} if the frequencies of any two palette colors differ by at
most one; equivalently, every frequency belongs to
$\{\lfloor n/\numclr\rfloor,\lceil n/\numclr\rceil\}$.  For $B\geq 0$, it is
\emph{$B$-additively balanced} if
$|\freq(i)-\sigma_{\numclr}|\leq B$ for every palette color $i$.  For
$0<\eta\leq 1$, it is \emph{$\eta$-relatively balanced} if
\[
(1-\eta)\sigma_{\numclr}\leq \freq(i)\leq
(1+\eta)\sigma_{\numclr}
\qquad\text{for every }i\in[\numclr].
\]
These two relaxations are referred to as \emph{near-equitable}.  Finally, a
proper coloring is \emph{$F$-upper-balanced} if every used color has frequency
at most $F$.  An upper-balanced coloring need not be near-equitable, since it
may contain singleton color classes.

The classical Hajnal--Szemer\'edi theorem ensures that an equitable
$\numclr$-coloring exists for every
$\numclr\geq\Delta+1$~\cite{HajnalSzemeredi1970}; polynomial-time centralized
constructions are also known~\cite{KiersteadEtAl2010}.  The current paper
addresses a different issue, namely, obtaining explicit frequency guarantees
quickly in bandwidth-limited distributed networks.

%%%%%%%%%%%%
\subsection{Distributed Models and Conventions}

We work in the synchronous \CONGEST\ model.  The communication network is
$G$, each vertex hosts a processor driven by a global clock, and in every
round each endpoint of an edge may send an $O(\lg n)$-bit message to the
other endpoint.  
%We also record the corresponding bounds for the \emph{Congested Clique} (\CC) model.  In that model the processors are the vertices of $G$, the communication network is a clique, and every ordered pair of processors may exchange an $O(\lg n)$-bit message in every round. The input graph itself is still $G$.

The algorithms assume common knowledge of $n$, $\Delta$, the palette size $\numclr$,
and their accuracy and failure parameters.  The \emph{failure exponent}
$\lambda\geq1$ is an input parameter; the phrase \emph{with high probability}
(w.h.p.) means probability at least $1-n^{-\lambda}$ for the specified value
of $\lambda$.  We use $\lg$ for the base-$2$ logarithm and $\ln$ for the
natural logarithm.  All probability statements concern the internal random
choices of the algorithms.

The exact value of $n$ can be replaced by a common estimate $N$ satisfying
$n\leq N\leq n^{K}$ for a fixed constant $K$.  Specifically, replacing $n$
by $N$ in stopping times and failure parameters changes the asymptotic round
bounds by at most a constant factor and can only decrease the stated failure
probabilities.
The same replacement enters the frequency parameters as well: the
quantities $R_{\mathrm{samp}}$, $B_{\numclr}$, $T$, $R_{\mathrm{prop}}$ and
$\Psi_{p,\numclr}$ defined below are then computed with $N$ in place of $n$,
and since $N\leq n^{K}$, each of them grows by at most the constant factor
$K$; hence all stated deviation bounds and frequency caps degrade by at most
a constant factor as well.
To avoid cumbersome notation, the presentation assumes that
$n$ itself is known.

When $\Delta=0$, assigning the single color $1$ to every vertex gives an
equitable coloring in zero rounds.  Hence we assume $\Delta\geq1$ hereafter.
No connectedness assumption is made.

%%%%%%%%%%%%
\subsection{Our Contributions}

Our earlier work~\cite{NirPeleg2026NearEquitable} obtains several palette--frequency tradeoffs by maintaining global color counts and assigning global color quotas.
In the \CONGEST\ model, it achieved (1) a coloring with at most $2(\Delta+1)$ colors, frequency range $[\sigma/2,\sigma]$ and time $O(\lg\Delta\,(D+\Delta+\lg^5\lg n))$, (2) for every integer $k\ge1$, a coloring with at most $(1+1/k)(\Delta+1)$ colors, frequency range $[\sigma/3,2k\sigma]$ and time $O(D+\Delta+\lg^5\lg n)$, and (3) a coloring with exact $(\Delta+1)$-color palette, frequency range $[\sigma/2, 2\sigma+ \lceil\lg(\Delta+1)\rceil \lceil(1+\varepsilon)\sigma\rceil]$ and time $O((D+\Delta)\lg n\lg\Delta)$.
These \CONGEST\ running times thus retain an additive $D+\Delta$ term.
The governing question here is which guarantees survive when this global coordination cost is avoided.

The first procedure, termed \SamplerBalance, invokes the distributed coloring
sampler of Feng, Hayes, and Yin~\cite{FengHayesYin2018} as a black box.  With
$\numclr=\lceil(2+\delta)\Delta\rceil$, it samples a coloring whose probability law is close to the uniform distribution on proper colorings.  The contribution lies in the frequency analysis: a Dobrushin/self-bounding argument proves that all
color frequencies concentrate simultaneously under the target distribution,
and total-variation distance transfers this event to the sampler output.
Hence the procedure controls a global statistic without computing it.

\begin{thmA}
For every fixed $\delta>0$ and every $\lambda\geq1$, procedure \SamplerBalance\
computes, with probability at least $1-n^{-\lambda}$, a proper coloring with
$\numclr=\lceil(2+\delta)\Delta\rceil$ colors satisfying
\[
\left|\freq(i)-\frac{n}{\numclr}\right|
=O_\delta\!\left(\sqrt{(\lambda+1)\frac{n}{\numclr}\lg n}+(\lambda+1)\lg n\right)
\qquad\text{simultaneously for all }i\in[\numclr].
\]
It runs in $O_\delta((\lambda+1)\lg n)$ \CONGEST\ rounds, with no dependence on
the diameter $D$.  (See Theorem~\ref{thm:sampler-balance}.)
\end{thmA}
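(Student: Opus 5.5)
The plan has three parts: concentration under the exact uniform measure, a union bound over colors, and a transfer to the sampler output through total variation. Let $\mu$ denote the uniform distribution on proper $\numclr$-colorings of $G$, with $\numclr=\lceil(2+\delta)\Delta\rceil$.

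\textbf{Mean under $\mu$.} Fix a color $i$ and write $\freq(i)=\sum_v X_v$ with $X_v=\mathbf 1[\clr(v)=i]$. Condition on the colors of $N(v)$. At most $\Delta$ colors are blocked, so the conditional law of $\clr(v)$ is uniform on at least $\numclr-\Delta$ colors. Hence $\Prob_\mu[X_v=1\mid \clr(N(v))]\le 1/(\numclr-\Delta)$, which gives an upper bound on the mean but not an exact one. To pin $\Exp_\mu\freq(i)$ to $n/\numclr$, use the color-permutation symmetry of $\mu$: permuting the palette preserves $\mu$. Therefore $\Exp_\mu\freq(i)=n/\numclr$ exactly for every $i$. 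This step is clean.

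\textbf{Concentration under $\mu$ (main obstacle).} For $\numclr\ge(2+\delta)\Delta$, Glauber dynamics for $\mu$ satisfies the Dobrushin uniqueness condition. Each influence coefficient is at most $1/(\numclr-\Delta)$, so the total influence is at most $\alpha=\Delta/(\numclr-\Delta)\le 1/(1+\delta)<1$. I would apply a Dobrushin-type concentration inequality for Lipschitz functions, such as Chatterjee's exchangeable-pair method or Külske's bound under the Dobrushin condition. For a 1-Lipschitz $f$ it gives variance-sensitive (Bernstein-type) tails
\[
\Prob_\mu\big[|f-\Exp f|\ge t\big]\le 2\exp\!\Big(-c_\delta\,\frac{t^2}{V+t}\Big).
\]
The variance proxy $V$ is controlled by a self-bounding argument. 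Each vertex contributes a conditional variance of $O(\Prob[X_v=1\mid\cdot])$, and summing gives $O_\delta(\freq(i))$, whose mean is $n/\numclr$. The naive bound $V=O(n)$ would only yield $\sqrt{n}$-type deviations, so this variance-sensitive refinement is what produces $\sqrt{(n/\numclr)\lg n}$. I expect this to be the delicate step. What I would try first is Chatterjee's theorem, which gives $\Prob[|f-\Exp f|\ge t]\le 2\exp(-t^2/(2(B\,\Exp f+Ct)))$ when the exchangeable-pair quantity is self-bounded by $B f+C$. Here $B,C=O_\delta(1)$ because $1-\alpha\ge\delta/(1+\delta)$.

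\textbf{Union bound.} Set
\[
t=C_\delta\Big(\sqrt{(\lambda+1)(n/\numclr)\lg n}+(\lambda+1)\lg n\Big).
\]
Then each color fails with probability at most $n^{-(\lambda+2)}$. A union bound over $\numclr\le(2+\delta)n+1$ colors gives failure probability at most $\tfrac12 n^{-\lambda}$ under $\mu$, for $n$ beyond a constant; small $n$ are handled by the $O(\lg n)$ additive term.

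\textbf{Transfer to the sampler.} Invoke the Feng--Hayes--Yin sampler with accuracy $\epsilon=\tfrac14 n^{-\lambda}$. For $\numclr\ge(2+\delta)\Delta$, it runs in $O_\delta(\lg(n/\epsilon))=O_\delta((\lambda+1)\lg n)$ \CONGEST\ rounds with no dependence on $D$. It outputs a proper coloring whose law $\nu$ satisfies $\TVD(\nu,\mu)\le\epsilon$. If the sampler's properness guarantee holds only with high probability, that failure event is folded into $\epsilon$. The bad event $\mathcal B$ (some color deviates by more than $t$) satisfies $\nu(\mathcal B)\le\mu(\mathcal B)+\epsilon\le n^{-\lambda}$, which completes the proof.
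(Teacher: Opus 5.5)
Your plan is correct and matches the paper's proof step for step. It has the same ingredients: the mean $n/\numclr$ from color-permutation symmetry, and a Dobrushin matrix with entries $1/(\numclr-\deg(v))\leq 1/(\numclr-\Delta)$ on edges, so both row and column sums are at most $\Delta/(\numclr-\Delta)<1$. It then uses a Bernstein-type tail whose variance proxy is the mean, a union bound over both tails and all colors, and a total-variation transfer from the Feng--Hayes--Yin sampler.

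The step you flag as delicate is exactly what the paper obtains by citing Paulin's Corollary~3.2 for $(1,0)$-$*$-self-bounding functions under Dobrushin dependence, with certificate $a_v(x)=\mathbf 1\{x_v=i\}$. That corollary is itself proved by Chatterjee's exchangeable-pair method, so your first choice works. Your alternative, K\"ulske's inequality, would not work: it is of bounded-differences type and gives only an $O(\sqrt{n\lg n})$ deviation.
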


The second procedure, termed \SlackGreedy, uses an elementary parallel
random-greedy process.  Every uncolored vertex proposes a locally available
color, and a proposal is accepted when it has no simultaneous conflict. The excess $s=\numclr-\Delta$ is termed the \emph{palette slack}.  The same slack that pays for local progress also limits the proposal load of each color. This yields an upper-balanced coloring for every $\numclr>\Delta$, including a palette arbitrarily close to $\Delta+1$.  
The price is that our analysis provides no lower-frequency guarantee;
whether the algorithm itself satisfies one is an open question (see
Remark~\ref{rem:balance-expectation} and Section~\ref{sec:discussion}).

\begin{thmB}
For every integer $\numclr>\Delta$ and every $\lambda\geq1$, procedure
\SlackGreedy\ computes, with probability at least $1-n^{-\lambda}$, a proper
coloring using at most $\numclr$ colors in which every used color has
frequency at most
\[
\min\!\left\{n,\;
O\!\left((\lambda+1)\left(\frac{n\lg n}{\numclr-\Delta}+\lg n\right)\right)
\right\}.
\]
In particular, with $\numclr=\Delta+\lceil(\Delta+1)/\lceil\ln n\rceil\rceil$
colors, the cap is $\min\{n,\,O((\lambda+1)(\sigma\lg^2 n+\lg n))\}$, where
$\sigma=n/(\Delta+1)$.  It runs in $O((\lambda+1)\lg n)$ \CONGEST\ rounds,
with no dependence on the diameter $D$.
(See Theorem~\ref{thm:slack-greedy} and
Corollaries~\ref{cor:slack-greedy-fractional}
and~\ref{cor:slack-greedy-near-exact}.)
\end{thmB}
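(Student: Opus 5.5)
The plan is to analyse the parallel random-greedy process directly. In each round, every uncolored vertex $v$ proposes a color chosen uniformly from its current available palette $A_t(v)$, namely the colors not yet fixed by any neighbor. A proposal is accepted if no uncolored neighbor proposes the same color in that round. Since $|A_t(v)|\ge \numclr-\deg(v)\ge s:=\numclr-\Delta\geq 1$, properness is automatic. The proof then needs two ingredients: a termination bound and a frequency cap.

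\textbf{Termination.} Fix an uncolored $v$ and condition on the history. Let $a=|A_t(v)|$ and let $u$ be the number of uncolored neighbors of $v$. Each uncolored neighbor $w$ hits $v$'s proposed color with probability at most $1/|A_t(w)|$. We then have two cases.

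\emph{Case $a\ge 2u$.} A union bound gives success probability at least $1/2$.

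\emph{Case $a<2u$.} We compare, using the standard averaging argument: $\sum_{c\in A_t(v)}\Pr[\text{some neighbor proposes }c]\le \sum_w |A_t(w)\cap A_t(v)|/|A_t(w)|\le u$. Hence a constant fraction of $v$'s colors have conflict probability at most $1/2$ whenever $a\ge 2u$. When $a<2u$, we use the slack instead: $a\ge s+u$ always, since colored neighbors remove at most $\deg(v)-u$ colors.

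Taking $\prod_w(1-1/|A_t(w)|)\ge e^{-2\sum 1/|A_t(w)|}$ shows that $v$ succeeds with constant probability $p_0$ in each round, independently of the past. Thus after $O((\lambda+1)\lg n)$ rounds every vertex is colored, with failure probability $n^{-\lambda-2}$ by a union bound over vertices. Each round uses $O(1)$ messages of $O(\lg n)$ bits, since a vertex only announces its proposal and its final color.

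\textbf{Frequency cap.} Fix a color $c$. In round $t$, each uncolored $v$ with $c\in A_t(v)$ proposes $c$ with probability at most $1/|A_t(v)|\le 1/s$. So the number $X_t$ of acceptances of $c$ in round $t$ is dominated by a sum of conditionally independent Bernoulli$(1/s)$ variables, with at most $n$ of them. Over $T=O((\lambda+1)\lg n)$ rounds, $\freq(c)\le\sum_t X_t$ has conditional mean at most $nT/s$. Freedman's inequality (or Chernoff's bound for a martingale with bounded increments) gives $\freq(c)\le 2nT/s+O((\lambda+1)\lg n)$ with probability $1-n^{-\lambda-2}$. This equals $O\bigl((\lambda+1)(n\lg n/s+\lg n)\bigr)$, and a union bound over the $\numclr\le n$ colors finishes this part; the cap $n$ is trivial.

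\textbf{Main obstacle.} I expect the main difficulty to be the termination lemma. The argument must give a constant per-round success probability even when $|A_t(w)|$ is small, down to the value $s$, which may be $1$. In that case a single neighbor $w$ with $|A_t(w)|=1$ could always block $v$. The resolution I would try is to note that such a $w$ then succeeds itself unless it is blocked, and to argue via the "$v$ or a neighbor makes progress" potential. If that fails, the fallback is to have vertices propose only with probability $1/2$ (lazy proposals), which gives $\Pr[\text{no neighbor hits }c]\ge \prod(1-1/(2|A(w)|))\ge 2^{-\,\sum}$. Combined with the averaging over $c\in A_t(v)$, this yields a constant success probability once $|A_t(v)|$ exceeds the number of uncolored neighbors, and the slack $s\ge1$ guarantees exactly that.

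\textbf{Near-exact palette.} The specialization $\numclr=\Delta+\lceil(\Delta+1)/\lceil\ln n\rceil\rceil$ gives $s\ge(\Delta+1)/\lceil\ln n\rceil$. Hence $n\lg n/s=O(\sigma\lg^2 n)$, which is the stated corollary.
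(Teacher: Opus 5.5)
\textbf{Frequency cap.} Your overall architecture is the paper's: a per-round progress bound that holds conditionally on every history, giving termination within $T=O((\lambda+1)\lg n)$ rounds; an adaptive Bernstein/Freedman bound on the number of proposals of each color, using that each trial proposes a fixed color with conditional probability at most $1/s$ (the paper's budget is $\mu_{p,\numclr}=pnT/s$); and a union bound over colors. The frequency half is fine, with one slip. The bound $\numclr\le n$ is not given. The paper puts $\ln\numclr$ into $R_{\mathrm{prop}}$ and relies on $O(\lg n)$-bit color identifiers to keep that term $O(\lg n)$.

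\textbf{Termination.} The termination argument as written has a genuine gap.
\begin{itemize}
\item Your case $a<2u$ is never closed. Averaging only gives conflict probability at most $u/a\le u/(s+u)$, which tends to $1$ when $s=1$ and $u$ is large.
\item The product step fails. $\prod_w(1-1/|A_t(w)|)$ is the survival probability of a single worst color, and $e^{-2\sum_w 1/|A_t(w)|}$ is not a constant in general. For example, take $s=1$ and $u=\Delta$ uncolored neighbors, each with $|A_t(w)|=2$; the bound is then $e^{-\Delta}$.
\end{itemize}

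The missing ingredient is built into \SlackGreedy\ itself. Each uncolored vertex is activated with probability $p<1$; the paper uses $p=\min\{1,\numclr/(2\Delta)\}$. Each neighbor's conflict is then bounded by $\Prob[w\text{ active and }c_t(w)=c_t(v)\mid v\text{ active}]\le p/a$, with $v$'s palette size $a$ in the denominator, not $|A_t(w)|$. The union bound gives conflict probability at most $p\,u/(s+u)\le p\Delta/\numclr\le 1/2$. So $v$ stays uncolored with probability at most $\theta=1-p+p^2\Delta/\numclr<3/4$.

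Your lazy fallback with $p=1/2$ is exactly this argument. It already works with the plain averaged union bound, giving success probability at least $1/4$, and needs no product estimate.

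The obstruction you feared cannot occur. An uncolored neighbor $w$ of an uncolored $v$ has $|A_t(w)|\ge s+|N(w)\cap U|\ge 2$. With this observation even your non-lazy variant can be repaired: use $1-x\ge 4^{-x}$ on $[0,1/2]$ and Jensen's inequality over $c\in A_t(v)$ to get success probability at least $4^{-u/a}>1/4$. That variant is, however, not the paper's procedure, which includes the activation step.
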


Table~\ref{tab:results} summarizes the two tradeoffs.  The exact frequency
parameters are defined in Theorems~\ref{thm:sampler-balance}
and~\ref{thm:slack-greedy}.

\begin{table}[t]
\centering
\footnotesize
\caption{Summary of the two procedures, preceded, for comparison, by
the three tradeoff points (1)--(3) of the companion
work~\cite{NirPeleg2026NearEquitable} listed above (whose running times carry
an additive diameter term)}.  The failure probability is at most
$n^{-\lambda}$.
\label{tab:results}
\begin{tabularx}{\textwidth}{@{}L{0.16\textwidth}L{0.18\textwidth}Y
 L{0.18\textwidth}L{0.20\textwidth}@{}}
\toprule
Procedure & Palette & Frequency guarantee & \CONGEST\ rounds
& Sequential time\\
\midrule
\cite{NirPeleg2026NearEquitable}\,(1)
& $2(\Delta+1)$
& Two-sided range $[\sigma/2,\,\sigma]$
& $O(\lg\Delta\,(D+\Delta+\lg^5\lg n))$
& --\\
\cite{NirPeleg2026NearEquitable}\,(2)
& $(1+1/k)(\Delta+1)$, $k\ge1$
& Two-sided range $[\sigma/3,\,2k\sigma]$
& $O(D+\Delta+\lg^5\lg n)$
& --\\
\cite{NirPeleg2026NearEquitable}\,(3)
& $\Delta+1$
& Two-sided range $[\sigma/2,\;2\sigma+\lceil\lg(\Delta+1)\rceil\lceil(1+\varepsilon)\sigma\rceil]$
& $O((D+\Delta)\lg n\,\lg\Delta)$
& --\\
\midrule
\SamplerBalance
& $\lceil(2+\delta)\Delta\rceil$
& Two-sided additive balance; relative balance under
  Condition~\eqref{eq:sampler-load}
& $O_\delta((\lambda+1)\lg n)$
& $O_\delta((m+n)(\lambda+1)\lg n)$\\[2mm]
\SlackGreedy
& Every $\numclr>\Delta$
& One-sided cap $\Psi_{p,\numclr}$; for the near-exact palette, see
  Corollary~\ref{cor:slack-greedy-near-exact}
& $O((\lambda+1)\lg n)$
& $O((m+n)(\lambda+1)\lg n)$ (expected)\\
\bottomrule
\end{tabularx}
\end{table}

The near-exact-palette endpoint of \SlackGreedy\ deserves one qualification.
Its exact cap always includes the deterministic minimum with $n$.  For fixed
$\lambda$, the asymptotic expression improves on the trivial cap only when
$\Delta$ is larger than a sufficiently large constant multiple of
$\lg^2 n$, and it is $o(n)$ when $\Delta/\lg^2 n$ tends to infinity.  Thus
the endpoint is most informative in the moderate- and high-degree regimes.

Two robustness remarks are in order.  First, exact knowledge of $\Delta$ is
not needed: if the vertices are given a common upper bound
$\hat\Delta\geq\Delta$ and the parameters below are computed from
$\hat\Delta$, then all guarantees continue to hold with $\Delta$ replaced by
$\hat\Delta$ throughout.  (For the first algorithm this holds because
$\numclr\geq(2+\delta)\hat\Delta\geq(2+\delta)\Delta$ still meets the
sampler's threshold; for the second, because the true conflict and proposal
probabilities are only smaller than the ones computed from $\hat\Delta$.)
Second, neither algorithm ever transmits a vertex identifier: messages carry
only colors, proposals, and activation bits.  Hence both algorithms run
unchanged in anonymous networks with port numbering, given common knowledge
of the parameters.

%%%%%%%%%%%%
\subsection{Relation to Previous Work}

Equitable coloring is introduced by Meyer~\cite{Meyer1973}; the existence
theory culminates in the Hajnal--Szemer\'edi theorem and its algorithmic
versions~\cite{HajnalSzemeredi1970,KiersteadEtAl2010}.  Parallel balanced
coloring is also studied as a practical load-balancing problem on shared-memory
and manycore machines~\cite{LuEtAl2017}.  Those algorithms and objectives
differ from the bandwidth-limited message-passing guarantees considered here.

The local symmetry-breaking literature supplies fast proper colorings but
does not ordinarily constrain global color-class sizes.  In particular, the
elementary randomized $(\Delta+1)$-coloring process is analyzed by
Johansson~\cite{Johansson1999}, and faster \LOCAL-model algorithms are known
for $(\Delta+1)$-coloring~\cite{HarrisSchneiderSu2016}.
In the \CONGEST\ model itself, Halld\'orsson, Kuhn, Maus, and Tonoyan
solve $(\Delta+1)$-list coloring in $O(\lg^5\lg n)$ rounds
w.h.p.~\cite{HalldorssonKMT2021}.  Measured against this properness-only
baseline, the $O((\lambda+1)\lg n)$ running times obtained here are the price
of the added frequency guarantees.
Distributed defective
and frugal colorings impose local restrictions on monochromatic degrees or on
repeated colors in neighborhoods~\cite{ChungPettieSu2014,FischerGhaffari2017};
these restrictions are different from the global frequency bounds studied
herein.

For the first result, Feng, Hayes, and Yin prove rapid mixing of the Lazy Local Metropolis chain above the $(2+\delta)\Delta$ threshold ~\cite{FengHayesYin2018}.  Fischer and Ghaffari independently obtain the same threshold using Local Glauber dynamics~\cite{FischerGhaffari2018}.  The \SamplerBalance\ procedure invokes the former chain; the latter result provides independent context for the threshold.
Below the $2\Delta$ threshold, the best known distributed sampler is
the distributed flip dynamics of Carlson, Frishberg, and
Vigoda~\cite{CarlsonFrishbergVigoda2023}, which handles
$\numclr>(11/6-\delta_0)\Delta$ colors for a fixed $\delta_0>0$ but requires
$O(n\lg n)$ rounds and leaves the stationary distribution unclear.  Thus the
palette size of \SamplerBalance\ reflects the current reach of
$O(\lg n)$-round distributed sampling, rather than a limitation of the
frequency analysis.
The frequency analysis uses Paulin's concentration inequality for self-bounding functions under Dobrushin dependence~\cite{Paulin2014}.

%No optimality claim is made for either palette--frequency tradeoff in this paper.

The rest of the paper is organized as follows.  Section~\ref{sec:prob-tools} states the probabilistic tools.  Section~\ref{sec:sampler-balance} presents the sampling-based algorithm and proves its two-sided frequency guarantee. Section~\ref{sec:slack-greedy} presents the coordination-free greedy algorithm and derives its palette--frequency tradeoffs.  Section~\ref{sec:discussion} discusses the limitations of the two approaches.

%%%%%%%%%%%%
\section{Probabilistic Tools}
\label{sec:prob-tools}

This section reviews the two concentration mechanisms used in the paper.
The first concerns dependent random vectors satisfying a Dobrushin
contraction condition.  The second controls an adaptively exposed sum of
Bernoulli variables.

%%%%%%%%%%%%
\subsection{Total-Variation Distance}

For two probability distributions $\nu$ and $\nu'$ on a finite space
$\Omega$, their \emph{total-variation distance} is
\begin{equation}
\label{eq:total-variation}
\TVD(\nu,\nu')
=\frac12\sum_{\omega\in\Omega}|\nu(\omega)-\nu'(\omega)|
=\max_{\cE\subseteq\Omega}|\nu(\cE)-\nu'(\cE)|.
\end{equation}
The following immediate consequence is the only %transfer 
property used hereafter.

\begin{fact}
%[Event comparison]
\label{fact:event-comparison}
If an event $\cE$ satisfies $\Prob_{\nu'}[\cE]\geq1-\beta$, then
$\Prob_{\nu}[\cE]\geq1-\beta-\TVD(\nu,\nu')$.
\end{fact}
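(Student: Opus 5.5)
The claim follows directly from the second (variational) form of the total-variation distance in \eqref{eq:total-variation}, applied to the single event $\cE$. I would first note that since $\TVD(\nu,\nu')=\max_{\cE'\subseteq\Omega}|\nu(\cE')-\nu'(\cE')|$, every event $\cE$ satisfies $|\nu(\cE)-\nu'(\cE)|\leq\TVD(\nu,\nu')$. In particular,
\[
\Prob_{\nu}[\cE]\;\geq\;\Prob_{\nu'}[\cE]-\TVD(\nu,\nu')\;\geq\;1-\beta-\TVD(\nu,\nu'),
\]
where the last step uses the hypothesis $\Prob_{\nu'}[\cE]\geq1-\beta$. This is exactly the stated bound.

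For completeness, I would also justify the equality of the two expressions in \eqref{eq:total-variation}, since the variational form is what the argument uses. Let $A=\{\omega:\nu(\omega)>\nu'(\omega)\}$. Because both measures have total mass $1$, the positive and negative parts of $\nu-\nu'$ have equal mass, so
\[
\nu(A)-\nu'(A)=\tfrac12\sum_{\omega\in\Omega}|\nu(\omega)-\nu'(\omega)|.
\]
For any other event $\cE'$, the difference $\nu(\cE')-\nu'(\cE')$ is at most the positive mass $\nu(A)-\nu'(A)$ and at least minus the negative mass, which is the same number. Hence the maximum over events equals the half-$\ell_1$ sum.

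There is no real obstacle here. The only point to check is the direction of the inequality, i.e., that we subtract the TVD when passing from $\nu'$ to $\nu$. The absolute value in the definition handles both directions, so this is immediate.
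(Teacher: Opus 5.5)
Your proposal is correct and matches the paper, which presents the fact as an immediate consequence of the variational form in Eq.~\eqref{eq:total-variation}: $|\nu(\cE)-\nu'(\cE)|\leq\TVD(\nu,\nu')$ for the single event $\cE$, combined with the hypothesis. Your extra check that the two expressions for $\TVD$ agree is also correct; the paper takes it for granted.
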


%%%%%%%%%%%%
\subsection{Dobrushin Dependence and Self-Bounding Concentration}

Let $X=(X_v)_{v\in V}$ be a random vector on a finite product space.  Denote
by $X_{-v}=(X_u)_{u\ne v}$ the vector obtained by deleting coordinate $v$.
For a random variable $Y$, the notation $\mathcal L(Y)$ denotes its probability law.

For every vertex $v$, fix a version of the conditional kernel of $X_v$
given $X_{-v}$.  A nonnegative matrix
$A=(A_{vu})_{v,u\in V}$ with zero diagonal is termed a
\emph{Dobrushin interdependence matrix} for $X$ if, for every $v$ and every
two boundary configurations $x_{-v}$ and $y_{-v}$ in the domain of these
kernels,
\begin{equation}
%&
\TVD\!\left(
 \mathcal L(X_v\mid X_{-v}=x_{-v}),
 \mathcal L(X_v\mid X_{-v}=y_{-v})
\right)
%\nonumber\\
%&\hspace{35mm}
~\leq~
\sum_{u\ne v}A_{vu}\mathbf 1\{x_u\ne y_u\}.
\label{eq:dobrushin-definition}
\end{equation}
Here $\mathbf 1\{\mathcal S\}$ is the \emph{indicator} of a statement
$\mathcal S$, namely, it equals $1$ when $\mathcal S$ holds and $0$
otherwise.  Thus $A_{vu}$ bounds the effect of coordinate $u$ on the
conditional distribution at coordinate $v$.  The qualification concerning a
fixed kernel matters only on boundary configurations of probability zero,
where conditioning alone does not determine a unique distribution.

We use the maximum row and column sums
\[
\|A\|_\infty=\max_v\sum_u A_{vu},
\qquad
\|A\|_1=\max_u\sum_v A_{vu}.
\]
The inequality $\|A\|_1<1$ is the Dobrushin contraction condition employed
below.

A nonnegative function $g$ on the same product space is termed
\emph{$(1,0)$-$*$-self-bounding} if there are coordinate coefficients
$a_v(x)\in[0,1]$ such that, for every two configurations $x$ and $y$,
\begin{equation}
\label{eq:self-bounding-definition}
g(x)-g(y)\leq
\sum_{\substack{v\in V\\x_v\ne y_v}}a_v(x),
\qquad
\sum_{v\in V}a_v(x)\leq g(x).
\end{equation}
Intuitively, each changed coordinate can account for at most one unit of
decrease in $g$, and the total available charge is bounded by the current
value of $g$.

We use the following specialization of Paulin's concentration
inequality~\cite
%[Corollary~3.2]
{Paulin2014}.

% Environment changed from "corollary" to "fact" (imported result) -- Claude
\begin{fact}
{\bf \cite[Corollary~3.2]{Paulin2014}}
%[Self-bounding concentration under Dobrushin dependence]
\label{cor:paulin-self-bounding}
Let $A$ be a Dobrushin interdependence matrix for $X$ such that
$\|A\|_1<1$ and $\|A\|_\infty\leq1$.  Let $g$ be a nonnegative
$(1,0)$-$*$-self-bounding function, and put
$\Gamma=1-\|A\|_1$.  Then, for every $t\geq0$,
\begin{align}
\Prob[g(X)\geq\Exp[g(X)]+t]
&\leq
\exp\!\left(-\frac{\Gamma t^2}
{2(\Exp[g(X)]+t)}\right),
\label{eq:paulin-upper}\\
\Prob[g(X)\leq\Exp[g(X)]-t]
&\leq
\exp\!\left(-\frac{\Gamma t^2}
{8\Exp[g(X)]}\right).
\label{eq:paulin-lower}
\end{align}
\end{fact}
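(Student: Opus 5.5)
The plan is not to reprove Paulin's inequality from scratch. Instead I would check that Fact~\ref{cor:paulin-self-bounding} is an exact specialization of \cite[Corollary~3.2]{Paulin2014}. That means matching every hypothesis and constant in the present statement against Paulin's. Paulin's corollary is stated for $(a,b)$-$*$-self-bounding functions of a random vector whose dependence is controlled by a Dobrushin interdependence matrix. The tails there are governed by $a$, $b$, and the contraction gap $1-\|A\|_1$.

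First, I would align the definition of the interdependence matrix in \eqref{eq:dobrushin-definition} with Paulin's. Two points need checking:
\begin{itemize}
\item The orientation of $A_{vu}$: whether row $v$ records the influence \emph{on} $v$, as here. This determines whether his contraction quantity is our $\|A\|_1$ or our $\|A\|_\infty$.
\item The convention of fixing a version of the conditional kernel on probability-zero boundaries.
\end{itemize}

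Next, I would check that \eqref{eq:self-bounding-definition} is precisely Paulin's $*$-self-bounding condition with $a=1$ and $b=0$. His condition reads $\sum_v a_v(x)\le a\,g(x)+b$, with coefficients in $[0,1]$ bounding the one-sided decrements $g(x)-g(y)$.

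Second, I would substitute $a=1$, $b=0$ into his two tail bounds. For the upper tail, Paulin's bound is of the form $\exp\bigl(-\Gamma t^2/(2(a\Exp g+b+at))\bigr)$, which reduces to \eqref{eq:paulin-upper}. For the lower tail, his bound is of the form $\exp\bigl(-\Gamma t^2/(8(a\Exp g+b))\bigr)$ in the regime where his extra condition holds. If I recall correctly, that condition is $\|A\|_\infty\le1$, used to control the reverse-direction coupling. With $a=1$, $b=0$ this gives \eqref{eq:paulin-lower}, with $\Gamma=1-\|A\|_1$.

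The main obstacle I expect is bookkeeping of conventions rather than mathematics. Paulin's constants may be stated with a slightly different normalization of $\Gamma$, such as $1-\|A\|_1$ versus $(1-\|A\|_1)^2$ or a factor from the Marton-coupling construction. He may also use the transpose of our $A$. If a discrepancy appears, I would fall back on his general theorem for Dobrushin-dependent vectors, whose proof runs through Marton couplings and the entropy method, and rederive the constants. A weaker constant in $\Gamma$ would change only the $O_\delta$ constants later in the paper, since $\Gamma$ will be a fixed positive function of $\delta$. Finally, I would note that the boundary-kernel convention is harmless. Both sides of \eqref{eq:dobrushin-definition} are only ever invoked on configurations of positive probability when the Fact is applied to the uniform proper coloring.
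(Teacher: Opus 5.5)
Your route is the paper's: read the Fact as \cite[Corollary~3.2]{Paulin2014} with $a=1$, $b=0$, and match hypotheses. The gap is in the one step that carries real content, the hypothesis check for the lower tail, which rests on a misremembered condition. Eq.~\eqref{eq:paulin-lower} is Part~(3) of Paulin's corollary. It does not follow from the $(1,0)$-$*$-self-bounding property alone. Part~(3) additionally requires that $g$ be \emph{weakly} $(1,0)$-$*$-self-bounding, that it change by at most one under a single-coordinate change, and the constant condition $1\geq a_c(1-\|A\|_1)$. The condition $\|A\|_\infty\leq1$ that you single out is a standing hypothesis of the whole corollary and is already assumed in the Fact. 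So your plan never verifies what Part~(3) actually needs, and the step ``with $a=1$, $b=0$ this gives \eqref{eq:paulin-lower}'' is unsupported as written.

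The missing checks are short, and they are exactly what the paper records after the statement:
\begin{itemize}
\item Weak self-bounding: since $a_v(x)\in[0,1]$, we have $\sum_v a_v(x)^2\leq\sum_v a_v(x)\leq g(x)$.
\item Unit bounded differences: if $x$ and $y$ differ only at $v$, the first relation in \eqref{eq:self-bounding-definition}, applied in both directions, gives $g(x)-g(y)\leq a_v(x)\leq1$ and $g(y)-g(x)\leq a_v(y)\leq1$.
\item The constant condition: $a_c<0.286$ and $0<1-\|A\|_1\leq1$, so $1\geq a_c(1-\|A\|_1)$ holds automatically.
\end{itemize}

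Your orientation and normalization worries are moot, since the Fact takes $A$ and $\Gamma=1-\|A\|_1$ directly from the cited corollary. Your fallback would not prove this statement anyway. Rederiving with a weaker $\Gamma$ yields a different inequality, and the explicit constants here feed into $B_{\numclr}$ and the explicit bound \eqref{eq:sampler-explicit-bound}, not only into $O_\delta$ terms.

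Finally, the boundary-kernel convention is not harmless for the reason you give. In Lemma~\ref{lem:uniform-color-frequency}, the Dobrushin bound is proved by changing coordinates one at a time. That path passes through improper boundary configurations of probability zero, which is precisely why the kernel is fixed on all boundaries.
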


The lower-tail inequality is Part~(3) of the cited corollary.  A
$(1,0)$-$*$-self-bounding function is also weakly
$(1,0)$-$*$-self-bounding, and its value changes by at most one when a single
coordinate is changed.  Moreover, the constant $a_c$ in
\cite[Corollary~3.2]{Paulin2014} satisfies $a_c<0.286$.  Hence the additional
condition $1\geq a_c(1-\|A\|_1)$ required there holds automatically.

%%%%%%%%%%%%
\subsection{Adaptive Bernoulli Sums}

The frequency analysis of \SlackGreedy\ uses a conditional form of the Bernstein--Chernoff bound.  The following statement records the standard moment-generating-function estimate that permits adaptive trials.
It follows from the stochastic domination of $(Y_j)_j$ by a sequence of independent Bernoulli($q_j$) trials, to which the usual Bernstein--Chernoff bound applies; see Doerr~\cite[Section~1.10.2]{Doerr2020} for the domination step and \cite[Chapter~1]{DubhashiPanconesi2009} for the Chernoff computation. Alternatively, it is a special case of Freedman's martingale
inequality~\cite{Freedman1975}.

\begin{fact}
%[Adaptive Bernstein--Chernoff bound]
\label{fact:adaptive-bernoulli}
Let $Y_1,\ldots,Y_N\in\{0,1\}$ be adapted to a filtration
$\cF_0,\ldots,\cF_N$.  Suppose that deterministic numbers
$q_1,\ldots,q_N\in[0,1]$ satisfy
\[
\Prob[Y_j=1\mid\cF_{j-1}]\leq q_j
\qquad\text{almost surely for every }j\in[N].
\]
Denote $\mu=\sum_{j=1}^Nq_j$.  Then, for every $\vartheta\geq0$ and every
$x\geq0$,
\begin{align}
\Exp\!\left[\exp\!\left(\vartheta\sum_{j=1}^NY_j\right)\right]
&\leq \exp\bigl((e^{\vartheta}-1)\mu\bigr),
\label{eq:adaptive-mgf}\\
\Prob\!\left[\sum_{j=1}^NY_j>
\mu+\sqrt{2\mu x}+\frac{2x}{3}\right]
&\leq e^{-x}.
\label{eq:adaptive-bernstein}
\end{align}
\end{fact}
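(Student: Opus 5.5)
We prove both parts of Fact~\ref{fact:adaptive-bernoulli} with the exponential-supermartingale method. The tail bound~\eqref{eq:adaptive-bernstein} will follow from the moment-generating-function bound~\eqref{eq:adaptive-mgf} by a Chernoff optimization.

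\textbf{Step 1: the MGF bound.} Put $S_k=\sum_{j\leq k}Y_j$ and fix $\vartheta\geq0$.
\begin{itemize}
\item Since $Y_k\in\{0,1\}$, we have $e^{\vartheta Y_k}=1+(e^{\vartheta}-1)Y_k$.
\item Taking conditional expectations and using $e^\vartheta-1\geq0$ together with the hypothesis $\Prob[Y_k=1\mid\cF_{k-1}]\leq q_k$ gives
\[
\Exp[e^{\vartheta Y_k}\mid\cF_{k-1}]\leq 1+(e^\vartheta-1)q_k\leq \exp\bigl((e^\vartheta-1)q_k\bigr).
\]
\item $S_{k-1}$ is $\cF_{k-1}$-measurable and the $q_k$ are deterministic. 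Hence $M_k=\exp\bigl(\vartheta S_k-(e^\vartheta-1)\sum_{j\leq k}q_j\bigr)$ is a supermartingale with $M_0=1$.
\item Therefore $\Exp[M_N]\leq1$. Multiplying by the deterministic factor $\exp((e^\vartheta-1)\mu)$ gives~\eqref{eq:adaptive-mgf}.
\end{itemize}
Adaptivity costs nothing here, because all the conditioning is absorbed by the tower property.

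\textbf{Step 2: Chernoff optimization.} By Markov's inequality, for every $\vartheta\geq0$,
\[
\Prob[S_N\geq \mu+u]\leq \exp\bigl(-\vartheta(\mu+u)+(e^\vartheta-1)\mu\bigr).
\]
Take $\vartheta=\ln(1+u/\mu)$. This gives the Bennett form $\exp(-\mu h(u/\mu))$, where $h(z)=(1+z)\ln(1+z)-z$. The degenerate case $\mu=0$ is trivial, since then $S_N=0$ almost surely.

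Next use the standard inequality $h(z)\geq z^2/(2(1+z/3))$. It reduces to checking that $(1+z/3)h(z)-z^2/2$ vanishes to second order at $0$ with a nonnegative second derivative. This yields the Bernstein bound
\[
\Prob[S_N\geq\mu+u]\leq\exp\!\left(-\frac{u^2}{2(\mu+u/3)}\right).
\]

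\textbf{Step 3: inverting the Bernstein bound.} This is the only step needing care. Set $u=\sqrt{2\mu x}+2x/3$. We must verify
\[
u^2\geq 2x(\mu+u/3),\quad\text{i.e.}\quad u^2-\tfrac{2x}{3}u-2\mu x\geq0.
\]
The positive root of this quadratic in $u$ is
\[
\tfrac{x}{3}+\sqrt{\tfrac{x^2}{9}+2\mu x}\;\leq\;\tfrac{x}{3}+\tfrac{x}{3}+\sqrt{2\mu x},
\]
by subadditivity of the square root. Hence our $u$ exceeds the root, and the inequality holds.

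Consequently $\Prob[S_N\geq\mu+u]\leq e^{-x}$. The strict-inequality event in~\eqref{eq:adaptive-bernstein} is contained in this one, which completes the proof.
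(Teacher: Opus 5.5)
Your proof is correct, but it takes a different route from the one the paper indicates. The paper gives no self-contained argument. It asserts that $(Y_j)$ is stochastically dominated by independent Bernoulli$(q_j)$ trials, citing Doerr for this coupling step. It then applies the textbook Bernstein--Chernoff bound for independent sums, citing Dubhashi--Panconesi, and mentions Freedman's inequality as an alternative.

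You bypass the coupling entirely. The identity $e^{\vartheta Y_k}=1+(e^{\vartheta}-1)Y_k$ plus the tower property makes $M_k$ a supermartingale, which gives \eqref{eq:adaptive-mgf} directly. You then carry out the Bennett--Bernstein--inversion computation yourself, and the sketched steps check out: the second derivative of $(1+z/3)h(z)-z^2/2$ is $\frac23\bigl(\ln(1+z)-\frac{z}{1+z}\bigr)\geq0$, and the inversion has slack exactly $\frac{2x}{3}\sqrt{2\mu x}$.

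The domination route gives a stronger intermediate statement, at the price of constructing a coupling. Under domination, the expectation of every nondecreasing function of $\sum_jY_j$, not only of $e^{\vartheta\sum_jY_j}$, is bounded by that of a sum of independent Bernoulli$(q_j)$ variables.

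Your route is shorter and fully self-contained, and it is essentially the Freedman route specialized to Bernoulli increments. Nothing in your supermartingale step uses that the $q_j$ are deterministic, only that they are $\cF_{j-1}$-measurable. So the same argument gives $\Prob[\sum_jY_j\geq\mu+u,\ \sum_jq_j\leq\mu]\leq e^{-\mu h(u/\mu)}$ for random predictable $q_j$. That is the kind of random-budget inequality Remark~\ref{rem:budget} alludes to, and domination by independent trials with fixed parameters does not directly provide it.
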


Note that Fact~\ref{fact:adaptive-bernoulli} requires no independence between the
variables $Y_j$.  It suffices to bound each success probability conditioned
on the information revealed earlier.

%%%%%%%%%%%%
\section{Near-Equitable Coloring by Local Sampling}
\label{sec:sampler-balance}

The first tradeoff point retains a two-sided frequency guarantee and removes
global coordination by enlarging the palette to about $2\Delta$.  Its
algorithmic ingredient is the Lazy Local Metropolis sampler of Feng, Hayes,
and Yin~\cite{FengHayesYin2018}.  We first state the transition rule and the
black-box mixing guarantee precisely.  We then prove frequency concentration
under the uniform proper-coloring measure and transfer the resulting event to
the sampler output.

%%%%%%%%%%%%
\subsection{The Lazy Local Metropolis Sampler}

For an integer $\numclr\geq1$, let
$\Omega_{\numclr}=[\numclr]^V$.  Denote by $\pi_{\numclr}$ the uniform
distribution over the proper $\numclr$-colorings of $G$, viewed as a
distribution on $\Omega_{\numclr}$.

We now describe one step of the Lazy Local Metropolis chain.  Let
$X_t\in\Omega_{\numclr}$ be the current coloring and put
$p_\delta=\min\{\delta/3,1/2\}$.  Every vertex becomes active independently
with probability $p_\delta$.  Each active vertex $v$ chooses a proposal
$c_t(v)$ uniformly from $[\numclr]$ and sends the pair
$(X_t(v),c_t(v))$ to its neighbors.

An edge $uv$ with two active endpoints is said to \emph{pass} if
\begin{equation}
\label{eq:active-edge-passes}
c_t(u)\ne c_t(v),\qquad
c_t(u)\ne X_t(v),\qquad
X_t(u)\ne c_t(v).
\end{equation}
An edge $uv$ with active endpoint $v$ and inactive endpoint $u$ passes if
\begin{equation}
\label{eq:boundary-edge-passes}
c_t(v)\ne X_t(u).
\end{equation}
An active vertex adopts its proposal only if every incident edge passes.
Every other vertex retains its current color.  Note that
Eq.~\eqref{eq:active-edge-passes} contains three distinct checks; all three
are part of the chain analyzed in~\cite{FengHayesYin2018}.

The chain may be initialized at an arbitrary $\numclr$-coloring, which need
not be proper.  Every step maps a total coloring to a total coloring, although
intermediate colorings need not be proper.  The sampling property required
here is the following result of Feng, Hayes, and
Yin~\cite{FengHayesYin2018}.

\begin{fact}
%[Distributed coloring sampler]
{\bf \cite{FengHayesYin2018}}
\label{fact:distributed-coloring-sampler}
For every fixed $\delta>0$, there is a constant $C_\delta>0$ with the
following property.  Let $\numclr\geq(2+\delta)\Delta$ and
$0<\zeta<1$.  Starting from any $X_0\in\Omega_{\numclr}$, after at most
$C_\delta\bigl(\ln n+\ln(1/\zeta)\bigr)$ rounds, the Lazy Local Metropolis algorithm returns a coloring $\widehat X$
whose distribution $\nu$ satisfies
$\TVD(\nu,\pi_{\numclr})\leq\zeta$.  Every message contains $O(\lg n)$ bits.
In particular,
$\Prob[\widehat X\text{ is improper}]\leq\zeta$.
\end{fact}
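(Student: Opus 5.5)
The statement is Fact~\ref{fact:distributed-coloring-sampler}, an imported result of Feng, Hayes, and Yin, so the plan is to reduce it to their coupling theorem and then verify the accompanying claims about messages and improperness. The central ingredient is a one-step path-coupling contraction for the Lazy Local Metropolis chain. Let $d$ be Hamming distance on $\Omega_{\numclr}$. I would show that for two colorings differing only at a single vertex $v$, there is a coupling of one step under which
\[
\Exp[d(X_{t+1},Y_{t+1})]\leq 1-\kappa_\delta,
\]
with $\kappa_\delta>0$ depending only on $\delta$ whenever $\numclr\geq(2+\delta)\Delta$. The coupling uses the same activation bits for both copies and the identity map on proposals, except at the neighbours of $v$. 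There, the proposals $X(v)$ and $Y(v)$ are swapped, so that a neighbour's proposal conflicts with $v$'s current colour in one copy exactly when it does in the other.

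The contraction then follows from two counts. The disagreement at $v$ is repaired when $v$ is active and every incident edge passes in both copies. Each of the three checks in \eqref{eq:active-edge-passes} and \eqref{eq:boundary-edge-passes} fails with probability $O(1/\numclr)$ per neighbour, which gives a repair probability of at least $p_\delta(1-2\Delta/\numclr-O(p_\delta\Delta/\numclr))$. A new disagreement can be created only at a neighbour $u$ of $v$, and only when $u$'s proposal behaves differently under the swap. This happens with probability at most $p_\delta\cdot O(1/\numclr)$ per neighbour, so at most $p_\delta\Delta\cdot O(1/\numclr)$ in total. With $\numclr\geq(2+\delta)\Delta$ and the choice $p_\delta=\min\{\delta/3,1/2\}$, the surplus of repairs over creations is a positive multiple of $p_\delta\delta/(2+\delta)$. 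The purpose of the laziness parameter is precisely to make the second-order terms, which come from pairs of active neighbours, smaller than this surplus.

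Path coupling (Bubley--Dyer) extends the contraction from adjacent pairs to all pairs, giving $\Exp[d(X_t,Y_t)]\leq n(1-\kappa_\delta)^t$. Running the chain for $t=\kappa_\delta^{-1}(\ln n+\ln(1/\zeta))$ steps, with $Y_0\sim\pi_{\numclr}$, makes this expectation at most $\zeta$. Stationarity of $\pi_{\numclr}$ must be checked separately, by showing that the chain is reversible with respect to the uniform measure on proper colourings. The three-part check in \eqref{eq:active-edge-passes} is exactly what ensures symmetric acceptance. Since $\TVD(\nu,\pi_{\numclr})\leq\Prob[X_t\neq Y_t]\leq\Exp[d(X_t,Y_t)]$, this yields the stated $C_\delta$. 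Each step takes one round in which every vertex sends the pair $(X_t(v),c_t(v))$, which is $O(\lg\numclr)=O(\lg n)$ bits, and the final improperness bound follows from Fact~\ref{fact:event-comparison} because $\pi_{\numclr}$ gives the improperness event probability zero.

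The main obstacle is the precise accounting in the one-step coupling, especially the interaction among simultaneously active neighbours. This is where the exact constant $2$ in the threshold is earned, and the full bookkeeping is done in~\cite{FengHayesYin2018}. In the paper I would cite it rather than redo it.
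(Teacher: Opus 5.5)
Your bottom line matches the paper, which does not prove this fact but imports it from~\cite{FengHayesYin2018}. Its last clause follows from Fact~\ref{fact:event-comparison} because $\pi_{\numclr}$ is supported on proper colorings, exactly as you note. Your message-size count and the passage from a contraction to the $\TVD$ bound (path coupling plus the coupling inequality) are also fine. However, the one-step contraction you present as the central ingredient fails for the coupling you specify, so the sketch cannot stand as a justification.

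The faulty step is the claim that new disagreements can arise only at neighbours of $v$. Under your coupling, take an active neighbour $u$ of $v$ whose proposal lies in $\{X(v),Y(v)\}$. It shows \emph{different} proposals in the two copies to all its other neighbours $w$. The checks $c_t(u)\neq X_t(w)$ and $c_t(u)\neq c_t(w)$ in \eqref{eq:active-edge-passes} can then fail in one copy and pass in the other, so an active $w$ at distance two from $v$ is accepted in only one copy.

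Such $w$ may well have color $X(v)$. Let $v$ have $\Delta$ children, each with $\Delta-1$ leaf children colored $X(v)$, and color the children outside $\{X(v),Y(v)\}$, so both copies are proper. The expected number of new disagreements at distance two is then about $2p_\delta^2\Delta(\Delta-1)/\numclr=\Theta(p_\delta^2\Delta)$. The only possible gain, repairing $v$, has probability at most $p_\delta$. Hence $\Exp[d(X_1,Y_1)]>1$ once $\Delta$ exceeds roughly $(2+\delta)/(2p_\delta)$.

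Laziness cannot absorb this term unless $p_\delta=O(\delta/\Delta)$, which destroys the $\Delta$-free round bound. The coupling itself has to depend on the colors around each neighbour of $v$. For example, you might keep the identity rather than the swap at a neighbour that already has another neighbour colored $X(v)$ or $Y(v)$; that neighbour's proposal of that color is then rejected in both copies anyway. Getting this right is exactly the work you are delegating to Feng, Hayes, and Yin.

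Independently, your first-order bookkeeping does not reach the $2\Delta$ threshold. A repair bound $p_\delta(1-2\Delta/\numclr)$ against creations of order $p_\delta\Delta/\numclr$ yields a positive surplus only for $\numclr>3\Delta$. The correct first-order repair probability is $p_\delta(1-\Delta/\numclr)$, because an inactive neighbour forbids only its current color. Under the swap, only one of the two colors $X(v),Y(v)$ proposed at a neighbour can create a disagreement; the other is rejected in both copies. The threshold $2\Delta$ is then the inequality $1-\Delta/\numclr>\Delta/\numclr$.

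Since the statement is an imported fact, the cleanest course is the one you end with: cite it, and drop or correct the sketch.
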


Fischer and Ghaffari~\cite{FischerGhaffari2018} obtain the same $\numclr>2\Delta$ threshold for a different chain, namely, Local Glauber
dynamics.
%Fact~\ref{fact:distributed-coloring-sampler} invokes only the Lazy Local Metropolis result of Feng, Hayes, and Yin.

%%%%%%%%%%%%
\subsection{Procedure \SamplerBalance}

Fix $\delta>0$ and put
\[
\numclr=\lceil(2+\delta)\Delta\rceil,
\qquad
\sigma_{\numclr}=\frac{n}{\numclr}.
\]
Procedure \SamplerBalance\ starts from an arbitrary total
$\numclr$-coloring and executes the sampler with target total-variation error
$\zeta=n^{-(\lambda+4)}$.  Its only persistent local state is the current
color.  At the end, one additional neighbor round verifies properness.

The complete procedure is given in Algorithm~\ref{alg:sampler-balance}.

\bigskip
%%%%%%%%%%%%%%%%%%%%%%%%%%%
\begin{algorithm}[H]
\caption{Procedure \SamplerBalance}
\label{alg:sampler-balance}
\KwIn{Graph $G=(V,E)$, constants $\delta>0$ and $\lambda\geq1$.}
\KwOut{A proper $\numclr$-coloring, or local \textsf{failure}.}
$\numclr\gets\lceil(2+\delta)\Delta\rceil$ and
$\zeta\gets n^{-(\lambda+4)}$\;
Start from an arbitrary total $\numclr$-coloring of $V$\;
Run the Lazy Local Metropolis algorithm for
$C_\delta(\ln n+\ln(1/\zeta))$ rounds, where $C_\delta$ is the constant from
Fact~\ref{fact:distributed-coloring-sampler}\;
Every vertex sends its final color to its neighbors\;
Every endpoint of a monochromatic edge declares local \textsf{failure};
all other vertices output their final colors\;
\end{algorithm}

\bigskip
If a failure is declared anywhere, the distributed execution is considered
unsuccessful.  The procedure requires no global failure-detection mechanism:
the theorem below concerns the event that no vertex declares failure.

%%%%%%%%%%%%
\subsection{Frequency Concentration under the Uniform Measure}

We next analyze a proper coloring drawn exactly from $\pi_{\numclr}$.  Put
\begin{equation}
\label{eq:dobrushin-parameters}
\alpha=\frac{\Delta}{\numclr-\Delta},
\qquad
\gamma=1-\alpha.
\end{equation}
The choice of $\numclr$ gives
\[
\alpha\leq\frac{1}{1+\delta}<1,
\qquad
\gamma\geq\frac{\delta}{1+\delta}>0.
\]

The main analytic ingredient is the following concentration statement.

\begin{lemma}
%[Frequency concentration]
\label{lem:uniform-color-frequency}
Let $X\sim\pi_{\numclr}$.  For every $i\in[\numclr]$, define
\[
\Phi_i(x)=|\{v\in V\mid x_v=i\}|,
\qquad x\in\Omega_{\numclr}.
\]
Then $\Exp[\Phi_i(X)]=\sigma_{\numclr}$, and, for every $t\geq0$,
\begin{align}
\Prob[\Phi_i(X)\geq\sigma_{\numclr}+t]
&\leq
\exp\!\left(-\frac{\gamma t^2}{2(\sigma_{\numclr}+t)}\right),
\label{eq:sampler-upper-tail}\\
\Prob[\Phi_i(X)\leq\sigma_{\numclr}-t]
&\leq
\exp\!\left(-\frac{\gamma t^2}{8\sigma_{\numclr}}\right).
\label{eq:sampler-lower-tail}
\end{align}
\end{lemma}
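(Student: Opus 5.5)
The plan has three parts: compute the expectation by symmetry, build a Dobrushin interdependence matrix for $\pi_{\numclr}$ with column sums at most $\alpha$, and check that $\Phi_i$ is $(1,0)$-$*$-self-bounding. Fact~\ref{cor:paulin-self-bounding} then gives both tails with $\Gamma\geq\gamma$.

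\emph{Expectation.} Every permutation $\tau$ of $[\numclr]$ maps proper colorings bijectively onto proper colorings, so $\pi_{\numclr}$ is invariant under relabeling colors. Hence $\Exp[\Phi_i(X)]$ does not depend on $i$. Since $\sum_i\Phi_i(X)=n$ identically, each expectation equals $n/\numclr=\sigma_{\numclr}$. The support is nonempty because $\numclr>\Delta$.

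\emph{Dobrushin matrix.} For $v$, and a boundary $x_{-v}$ that is proper on $G-v$, the conditional law of $X_v$ is uniform on the available set
\[
L(x_{-v})=[\numclr]\setminus\{x_u: u\in N(v)\},
\qquad |L(x_{-v})|\geq\numclr-\Delta .
\]
For boundary configurations of probability zero, I would fix the kernel to be uniform on $L(x_{-v})$ when this set is nonempty, which it always is. This kernel depends only on the neighbors' colors. Changing a single neighbor's color $u$ alters $L$ by removing at most one color and adding at most one. The total-variation distance between uniform distributions on two sets $L,L'$ with $|L\triangle L'|\leq2$ and both sizes at least $\numclr-\Delta$ is at most $1/(\numclr-\Delta)$. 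A short check covers the cases where one element is swapped or one element is added or removed.

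A hybrid (telescoping) argument then changes the differing coordinates one at a time. Each intermediate configuration is a valid argument of the fixed kernel, since any neighbor coloring still leaves at least $\numclr-\Delta$ colors. This yields
\[
A_{vu}=\frac{\mathbf 1\{uv\in E\}}{\numclr-\Delta}.
\]
Row and column sums are at most $\Delta/(\numclr-\Delta)=\alpha<1$, so $\|A\|_\infty\leq1$ and $\Gamma=1-\|A\|_1\geq\gamma$.

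\emph{Self-bounding.} Take $a_v(x)=\mathbf 1\{x_v=i\}\in[0,1]$. Then $\sum_v a_v(x)=\Phi_i(x)$. Moreover,
\[
\Phi_i(x)-\Phi_i(y)\leq|\{v: x_v=i\neq y_v\}|\leq\sum_{v:\,x_v\neq y_v}a_v(x).
\]
So $\Phi_i$ is nonnegative and $(1,0)$-$*$-self-bounding.

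Applying Fact~\ref{cor:paulin-self-bounding} with $\Exp[g(X)]=\sigma_{\numclr}$ gives \eqref{eq:paulin-upper} and \eqref{eq:paulin-lower}. Since $\Gamma\geq\gamma$, replacing $\Gamma$ by $\gamma$ only weakens the bounds, which gives \eqref{eq:sampler-upper-tail} and \eqref{eq:sampler-lower-tail}.

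The main obstacle is the Dobrushin step. One must fix the conditional kernels consistently on improper or zero-probability boundaries, which the definition requires to be quantified over the full kernel domain. One must also verify the $1/(\numclr-\Delta)$ total-variation bound for a single-neighbor change, and justify summing these bounds over multiple changed coordinates via the triangle inequality along a path of configurations.
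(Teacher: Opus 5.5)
Your proposal is correct and follows the paper's proof essentially step for step. It uses uniform kernels on the available palette (extended to zero-probability boundaries), a single-coordinate total-variation bound telescoped along a path of configurations, the coefficients $a_v(x)=\mathbf 1\{x_v=i\}$ for self-boundedness, and color-permutation symmetry for the mean; the only difference is cosmetic, since you take $A_{vu}=1/(\numclr-\Delta)$ for $uv\in E$ instead of the paper's $1/(\numclr-\deg(v))$, which gives the same norm bound $\alpha$.
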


\begin{proof}
Fix a color $i$.  We first establish a Dobrushin interdependence bound.  For
every boundary configuration $x_{-v}$, including configurations of
probability zero, define the conditional kernel at $v$ to be uniform over
\begin{equation}
\label{eq:available-kernel}
\cP_v(x)=[\numclr]\setminus\{x_u\mid u\in N(v)\}.
\end{equation}
This palette is nonempty since $\numclr>\deg(v)$.  On every feasible boundary
configuration, Eq.~\eqref{eq:available-kernel} is exactly the conditional law
of $X_v$ under $\pi_{\numclr}$.

Consider first two boundary configurations that differ only at a coordinate
$u$.  If $uv\notin E$, their kernels at $v$ are identical.  If $uv\in E$,
the two available palettes differ by at most one deletion and one insertion.
Both palettes have size at least $\numclr-\deg(v)$.  A direct comparison of
the two uniform distributions therefore gives total-variation distance at
most $1/(\numclr-\deg(v))$.
Indeed, if the two palettes coincide, the distance is zero.  If they
are $Q\cup\{a\}$ and $Q\cup\{b\}$ with $a\ne b$ and $|Q|=k-1$ (one deletion
and one insertion), the distance is $1/k$; if one is $Q$ and the other
$Q\cup\{b\}$ with $|Q|=k$, the distance is $1/(k+1)$.  In every case the
distance is at most one over the smaller palette size, which is at least
$\numclr-\deg(v)$.

Now consider arbitrary boundary configurations $x_{-v}$ and $y_{-v}$.  Change
their differing coordinates one at a time.  Applying the preceding
single-coordinate estimate at every step and the triangle inequality for
total-variation distance yields Eq.~\eqref{eq:dobrushin-definition} for the
matrix
\[
A_{vu}=
\begin{cases}
1/(\numclr-\deg(v)),&uv\in E,\\
0,&uv\notin E.
\end{cases}
\]
Its maximum row and column sums satisfy
\begin{align*}
\|A\|_\infty
&=\max_v\frac{\deg(v)}{\numclr-\deg(v)}
\leq\frac{\Delta}{\numclr-\Delta},\\
\|A\|_1
&=\max_u\sum_{v\in N(u)}\frac{1}{\numclr-\deg(v)}
\leq\frac{\Delta}{\numclr-\Delta}.
\end{align*}
Thus both norms are at most $\alpha<1$.

We next verify the self-bounding property.  For a configuration
$x=(x_v)_{v\in V}$, set $a_v(x)=\mathbf 1\{x_v=i\}$.  For every $x$ and $y$,
\[
\Phi_i(x)-\Phi_i(y)
\leq\sum_{\substack{v\in V\\x_v\ne y_v}}a_v(x),
\qquad
\sum_{v\in V}a_v(x)=\Phi_i(x).
\]
Indeed, every occurrence of color $i$ lost in passing from $x$ to $y$ is
charged to its changed coordinate.
In more detail, write $S_x=\{v\in V\mid x_v=i\}$, so that
$\Phi_i(x)=|S_x|$ and $a_v(x)=\mathbf 1\{v\in S_x\}$.  Then
$\Phi_i(x)-\Phi_i(y)\leq|S_x\setminus S_y|$, and every
$v\in S_x\setminus S_y$ satisfies $x_v=i\neq y_v$, so it contributes
$a_v(x)=1$ to the sum over the changed coordinates.  The second relation
holds with equality, since $\sum_v a_v(x)=|S_x|=\Phi_i(x)$.
Hence $\Phi_i$ is $(1,0)$-$*$-self-bounding.

Finally, color-name permutations preserve $\pi_{\numclr}$.  The random
variables $\Phi_1(X),\ldots,\Phi_{\numclr}(X)$ therefore have the same mean, and
their sum is always $n$.  It follows that
$\Exp[\Phi_i(X)]=n/\numclr=\sigma_{\numclr}$.  Applying
Fact~\ref{cor:paulin-self-bounding} with
$1-\|A\|_1\geq\gamma$ proves
Eqs.~\eqref{eq:sampler-upper-tail} and~\eqref{eq:sampler-lower-tail}.
\end{proof}

%\begin{remark}
%[Why self-bounding, rather than bounded differences]
%\label{rem:why-self-bounding}
We remark that changing one coordinate changes $\Phi_i$ by at most one, so a
bounded-differences inequality under the Dobrushin condition would bound the
deviation of $\Phi_i$ by $O(\sqrt{n\lg n})$ w.h.p., the variance proxy being
the number $n$ of coordinates.  That bound is informative only when
$\sqrt{n\lg n}\ll\sigma_{\numclr}$, i.e., when
$\Delta\ll\sqrt{n/\lg n}$.  The self-bounding form replaces the ambient
dimension $n$ by the mean $\sigma_{\numclr}$ in the variance proxy, which is
what extends the useful range of Theorem~\ref{thm:sampler-balance} to
$\Delta\ll n/\lg n$ (cf.\ Condition~\eqref{eq:sampler-load}).
%\end{remark}

To cover every color simultaneously, define
\begin{equation}
\label{eq:sampler-additive-error}
R_{\mathrm{samp}}=(\lambda+4)\ln n+\ln(2\numclr),
\qquad
B_{\numclr}=
\sqrt{\frac{8\sigma_{\numclr}R_{\mathrm{samp}}}{\gamma}}
+\frac{2R_{\mathrm{samp}}}{\gamma}.
\end{equation}

%%%%%%%%%%%%
\subsection{Main Sampling Guarantee}

We obtain the following diameter-independent two-sided guarantee.

\begin{theorem}
\label{thm:sampler-balance}
Fix $\delta>0$ and $\lambda\geq1$, let $n\geq2$ (as implied by the
standing assumption $\Delta\geq1$), and let
$\numclr=\lceil(2+\delta)\Delta\rceil$.

\smallskip
\noindent
\textnormal{(a)}
Algorithm~\ref{alg:sampler-balance} computes, with probability at least
$1-n^{-\lambda}$, a proper $B_{\numclr}$-additively balanced
$\numclr$-coloring.  More explicitly, every $i\in[\numclr]$ satisfies
\begin{equation}
\label{eq:sampler-explicit-bound}
\max\{0,\sigma_{\numclr}-B_{\numclr}\}
\leq\freq(i)\leq\sigma_{\numclr}+B_{\numclr}.
\end{equation}
In particular,
\begin{equation}
\label{eq:sampler-asymptotic-bound}
\left|\freq(i)-\frac{n}{\numclr}\right|
=O_\delta\!\left(
\sqrt{(\lambda+1)\frac{n}{\numclr}\lg n}
+(\lambda+1)\lg n
\right)
\end{equation}
simultaneously for all colors $i\in[\numclr]$.

\smallskip
\noindent
\textnormal{(b)}
Fix $0<\eta\leq1$.  If
\begin{equation}
\label{eq:sampler-load}
\sigma_{\numclr}\geq
\frac{8R_{\mathrm{samp}}}{\gamma\eta^2},
\end{equation}
then, with the same probability, the output is
$\eta$-relatively balanced.

\smallskip\noindent
The 
%expected
running times are $O_\delta((m+n)(\lambda+1)\lg n)$ sequentially and $O_\delta((\lambda+1)\lg n)$ in the \CONGEST\ model.
\end{theorem}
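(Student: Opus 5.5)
The plan is to prove the frequency bounds for an exact sample $X\sim\pi_{\numclr}$ via Lemma~\ref{lem:uniform-color-frequency}, and then transfer them to the sampler output. Since $\numclr\geq(2+\delta)\Delta$, Fact~\ref{fact:distributed-coloring-sampler} with $\zeta=n^{-(\lambda+4)}$ gives an output law $\nu$ with $\TVD(\nu,\pi_{\numclr})\leq\zeta$. Let $\cE$ be the event that the coloring is proper and $|\Phi_i-\sigma_{\numclr}|\leq B_{\numclr}$ for all $i$. On $\cE$ no vertex declares failure, and the output satisfies Eq.~\eqref{eq:sampler-explicit-bound}; the lower bound $\max\{0,\cdot\}$ is trivial because frequencies are nonnegative. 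By Fact~\ref{fact:event-comparison} it suffices to show $\Prob_{\pi_{\numclr}}[\cE^c]\leq n^{-\lambda}-\zeta$. Under $\pi_{\numclr}$ properness holds surely, so only the frequency deviations need to be controlled.

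For a single color, I apply Eq.~\eqref{eq:sampler-upper-tail} with $t=B_{\numclr}$. Write $R=R_{\mathrm{samp}}$. It suffices that $\gamma t^2\geq 2R(\sigma_{\numclr}+t)$. This holds once $t\geq R/\gamma+\sqrt{R^2/\gamma^2+2R\sigma_{\numclr}/\gamma}$, and the right side is at most $2R/\gamma+\sqrt{2R\sigma_{\numclr}/\gamma}\leq B_{\numclr}$. For the lower tail, Eq.~\eqref{eq:sampler-lower-tail} needs $\gamma t^2\geq 8\sigma_{\numclr}R$, which follows from $t\geq\sqrt{8\sigma_{\numclr}R/\gamma}$. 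So each tail is at most $e^{-R}=n^{-(\lambda+4)}/(2\numclr)$. A union bound over the $2\numclr$ tails gives total failure probability at most $n^{-(\lambda+4)}$ under $\pi_{\numclr}$. Adding $\zeta$ gives $2n^{-(\lambda+4)}\leq n^{-\lambda}$ for $n\geq2$.

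The asymptotic form \eqref{eq:sampler-asymptotic-bound} comes from three observations. First, $\gamma\geq\delta/(1+\delta)$. Second, $\numclr\leq(2+\delta)\Delta+1\leq(3+\delta)n$, so $\ln(2\numclr)=O_\delta(\lg n)$. Third, $\lambda+4\leq 4(\lambda+1)$. Together these give $R=O_\delta((\lambda+1)\lg n)$.

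For part (b), Condition~\eqref{eq:sampler-load} gives $\sqrt{8\sigma_{\numclr}R/\gamma}\leq\eta\sigma_{\numclr}$ and $2R/\gamma\leq\eta^2\sigma_{\numclr}/4$. Adding these, $B_{\numclr}\leq(\eta+\eta^2/4)\sigma_{\numclr}$, which exceeds $\eta\sigma_{\numclr}$, so constants must be handled. The fix I will use is to not use $B_{\numclr}$ in part (b), and instead rerun the tail bounds directly with $t=\eta\sigma_{\numclr}$. The lower tail then gives $\exp(-\gamma\eta^2\sigma_{\numclr}/8)\leq e^{-R}$ by the condition. The upper tail gives $\exp(-\gamma\eta^2\sigma_{\numclr}/(2(1+\eta)))$, and since $\eta\leq1$ its exponent is at least $\gamma\eta^2\sigma_{\numclr}/4\geq 2R$. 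The same union bound and transfer then finish part (b). Matching these constants exactly is the main place I expect friction.

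For the running time, the sampler runs $C_\delta(\ln n+(\lambda+4)\ln n)=O_\delta((\lambda+1)\lg n)$ rounds, plus one verification round. Each round is simulated sequentially in $O(m+n)$ time, since every vertex and edge checks the pass conditions in constant time.
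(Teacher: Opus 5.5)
Your proposal is correct and follows the paper's proof essentially step for step. It uses the same tail bounds from Lemma~\ref{lem:uniform-color-frequency} at $t=B_{\numclr}$, the same union bound to $n^{-(\lambda+4)}$, and the same total-variation transfer. For part~(b) it applies the tails directly with $t=\eta\sigma_{\numclr}$, exactly as the paper does rather than going through $B_{\numclr}$, so the friction you anticipated does not materialize. The only cosmetic difference is in the upper-tail check for part~(a): you bound the positive root $r+\sqrt{r^2+2r\sigma_{\numclr}}$ of the quadratic (with $r=R_{\mathrm{samp}}/\gamma$), whereas the paper expands $B_{\numclr}^2-2r(\sigma_{\numclr}+B_{\numclr})=6\sigma_{\numclr}r+2r\sqrt{8\sigma_{\numclr}r}\geq0$ directly.
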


\begin{proof}
We start with Part~(a).  Let $X\sim\pi_{\numclr}$ and fix a color $i$.
Put $r=R_{\mathrm{samp}}/\gamma$.  By
Eq.~\eqref{eq:sampler-additive-error},
$B_{\numclr}=\sqrt{8\sigma_{\numclr}r}+2r$.  The lower-tail exponent in
Eq.~\eqref{eq:sampler-lower-tail} is at least
$\gamma B_{\numclr}^2/(8\sigma_{\numclr})\geq R_{\mathrm{samp}}$.
For the upper tail, a direct calculation gives
\begin{align*}
B_{\numclr}^2
-2r(\sigma_{\numclr}+B_{\numclr})
&=6\sigma_{\numclr}r
+2r\sqrt{8\sigma_{\numclr}r}
\geq0.
\end{align*}
Hence the exponent in Eq.~\eqref{eq:sampler-upper-tail} is also at least
$R_{\mathrm{samp}}$.  A union bound over the two tails and all
$\numclr$ colors shows that every color satisfies
Eq.~\eqref{eq:sampler-explicit-bound} with probability at least
\[
1-2\numclr e^{-R_{\mathrm{samp}}}
=1-n^{-(\lambda+4)}
\]
under $\pi_{\numclr}$.

Let $\cE_{\mathrm{add}}$ denote the explicit event in
Eq.~\eqref{eq:sampler-explicit-bound}.  This event is well-defined without
invoking asymptotic notation.  Let $\widehat X$ be the output of the sampler
and let $\nu=\mathcal L(\widehat X)$.  Fact
~\ref{fact:distributed-coloring-sampler} gives
$\TVD(\nu,\pi_{\numclr})\leq n^{-(\lambda+4)}$.  Since every coloring in the
support of $\pi_{\numclr}$ is proper, Fact~\ref{fact:event-comparison} gives
\[
\Prob_\nu[\widehat X\text{ is proper and }\cE_{\mathrm{add}}\text{ holds}]
\geq1-2n^{-(\lambda+4)}
\geq1-n^{-\lambda}\;\mbox{(using $n\geq2$)}.
\]
This proves the explicit claim.  Since
$\gamma\geq\delta/(1+\delta)$ and
$R_{\mathrm{samp}}=O_\delta((\lambda+1)\lg n)$,
Eq.~\eqref{eq:sampler-asymptotic-bound} follows deterministically from
Eq.~\eqref{eq:sampler-additive-error}.

Let us now prove Part~(b).  Apply
Eqs.~\eqref{eq:sampler-upper-tail} and~\eqref{eq:sampler-lower-tail} with
$t=\eta\sigma_{\numclr}$.  Since $0<\eta\leq1$, the upper-tail exponent is at
least $\gamma\eta^2\sigma_{\numclr}/4$, and the lower-tail exponent equals
$\gamma\eta^2\sigma_{\numclr}/8$.  Condition~\eqref{eq:sampler-load} makes
both exponents at least $R_{\mathrm{samp}}$.  The same union bound and
total-variation transfer prove the relative guarantee.

It remains to account for the running times.  Fact
~\ref{fact:distributed-coloring-sampler} gives $O_\delta((\lambda+1)\lg n)$ sampler rounds, and the final verification adds one round. A sequential simulation processes the $n$ activation/proposal choices and scans the $m$ edges in every round, yielding worst-case time $O_\delta((m+n)(\lambda+1)\lg n)$, since each proposal is drawn uniformly from $[\numclr]$ with no rejection step.  This completes the proof.
\end{proof}

Theorem~\ref{thm:sampler-balance} should be considered in three regimes. 
%Three regimes clarify when Theorem \ref{thm:sampler-balance} carries information. 
The upper bound in Eq.~\eqref{eq:sampler-explicit-bound} is
informative whenever $\sigma_{\numclr}+B_{\numclr}<n$.  The lower bound ---
and with it the two-sided additive guarantee --- is informative exactly when
$B_{\numclr}<\sigma_{\numclr}$, which holds if and only if
$\sigma_{\numclr}=\Omega_\delta((\lambda+1)\lg n)$; up to the constant, this
is Condition~\eqref{eq:sampler-load} with constant $\eta$.  Below this load,
$\max\{0,\sigma_{\numclr}-B_{\numclr}\}=0$ and the lower half of
Eq.~\eqref{eq:sampler-explicit-bound} degenerates to the trivial bound
$\freq(i)\geq0$.  Finally, Condition~\eqref{eq:sampler-load} upgrades the
additive guarantee to $\eta$-relative balance. Thus Condition~\eqref{eq:sampler-load} 
%should be read not as a technical convenience but as delineating, 
delineated, essentially, the regime in which a two-sided guarantee is available at all.

%\begin{remark}
%[Near-optimality of the additive bound]
%\label{rem:sampler-optimality}
We remark that under the target distribution itself, the deviation in
Eq.~\eqref{eq:sampler-asymptotic-bound} is essentially best possible for this
palette.  Consider, for instance, the graph consisting of $n/(\Delta+1)$
disjoint copies of $K_{\Delta+1}$.  Under $\pi_{\numclr}$ the cliques are
colored independently, so each frequency $\Phi_i(X)$ is a
$\mathrm{Binomial}(n/(\Delta+1),(\Delta+1)/\numclr)$ variable with mean
$\sigma_{\numclr}$ and variance $\Theta(\sigma_{\numclr})$; already a single
color deviates by $\Omega(\sqrt{\sigma_{\numclr}})$ with constant
probability, and a routine second-moment computation over the $\numclr$
colors raises this to $\Omega(\sqrt{\sigma_{\numclr}\lg\numclr})$ for the
maximal deviation (when $\lg\numclr=O(\sigma_{\numclr})$).  Hence the first
term of Eq.~\eqref{eq:sampler-asymptotic-bound} is necessary up to the
dependence on $\lambda$ and the gap between $\lg n$ and $\lg\numclr$.
%\end{remark}

%%%%%%%%%%%%
\section{Coordination-Free Coloring with Palette Slack}
\label{sec:slack-greedy}

The second tradeoff point brings the palette arbitrarily close to
$\Delta+1$ and retains a diameter-independent running time.  The price is a
one-sided guarantee: every used color class is capped, but small classes are
not prevented.  The algorithm requires only activations, locally available
colors, and neighbor conflict tests.

Intuitively, palette slack serves two roles.  It reduces the probability that
a proposal conflicts with a neighbor, and it reduces the maximum conditional
probability of proposing any fixed color.  More formally, these two effects
yield the progress and proposal-load bounds below.

%%%%%%%%%%%%
\subsection{Procedure \SlackGreedy}

Fix an integer $\numclr>\Delta$ whose color identifiers use $O(\lg n)$ bits,
and denote the palette slack by
\[
s=\numclr-\Delta.
\]
Set
\begin{equation}
\label{eq:slack-parameters}
p=\min\!\left\{1,\frac{\numclr}{2\Delta}\right\},
\qquad
\theta=1-p+\frac{p^2\Delta}{\numclr},
\end{equation}
and define the predetermined iteration count
\begin{equation}
\label{eq:slack-iteration-count}
T=T_{p,\numclr}=
\left\lceil
\frac{(\lambda+3)\ln n}{\ln(1/\theta)}
\right\rceil.
\end{equation}
The choice of $p$ in Eq.~\eqref{eq:slack-parameters} is the minimizer
of $\theta=1-p+p^2\Delta/\numclr$ over $p\in[0,1]$, balancing the activation
rate against the conflict rate.
If $\numclr<2\Delta$, then
$p=\numclr/(2\Delta)$ and
$\theta=1-\numclr/(4\Delta)<3/4$.  If
$\numclr\geq2\Delta$, then $p=1$ and
$\theta=\Delta/\numclr\leq1/2$.  Consequently,
\begin{equation}
\label{eq:slack-T-asymptotic}
T=O((\lambda+1)\lg n)
\qquad\text{for every }\numclr>\Delta.
\end{equation}

At the beginning of an iteration, an uncolored vertex deletes the colors
already used by its colored neighbors.  Active uncolored vertices then
propose colors in parallel.  A proposal is accepted exactly when no active
uncolored neighbor makes the same proposal.  All acceptances within an
iteration are simultaneous.  The number of iterations is fixed in advance,
so no global termination test or global color counter is required.

The formal procedure is given in Algorithm~\ref{alg:slack-greedy}.

\bigskip
%%%%%%%%%%%%%%%%%%%%%%%%%%%%
\begin{algorithm}[H]
\caption{Procedure \SlackGreedy}
\label{alg:slack-greedy}
\KwIn{Graph $G=(V,E)$, palette size $\numclr>\Delta$, and
failure exponent $\lambda\geq1$.}
\KwOut{A proper coloring, or local \textsf{failure}.}
Compute $p$ and $T$ from Eqs.~\eqref{eq:slack-parameters}
and~\eqref{eq:slack-iteration-count}; set $U\gets V$\;
\For{$t=1,\ldots,T$}{
  Every $v\in U$ becomes active independently with probability $p$\;
  \ForEach{active $v\in U$ in parallel}{
    $\cP_t(v)\gets
    [\numclr]\setminus\{\clr(u)\mid u\in N(v)\setminus U\}$\;
    Choose $c_t(v)\in\cP_t(v)$ uniformly and send it to every neighbor\;
  }
  Every active $v\in U$ whose proposal differs from all proposals of active
  vertices in $N(v)\cap U$ sets $\clr(v)\gets c_t(v)$\;
  Every newly colored vertex sends its color to its neighbors and removes
  itself from $U$\;
}
Every vertex in $U$ declares local \textsf{failure}; every other vertex
outputs its color\;
\end{algorithm}

%%%%%%%%%%%%
\subsection{Progress and Termination}

We first establish the properness invariant and the one-iteration progress
bound.

\begin{lemma}
%[One-iteration progress]
\label{lem:slack-greedy-progress}
At every iteration, the current partial coloring is proper.  Moreover, for
every uncolored vertex $v$ and every execution history up to the beginning of
the iteration,
\[
\Prob[v\text{ remains uncolored after the iteration}\mid\text{history}]
\leq\theta.
\]
\end{lemma}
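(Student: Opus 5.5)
Properness and the progress bound are proved separately: properness by induction over iterations, progress by conditioning on the history and on the randomness of $v$'s neighbors.

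\textbf{Properness.} The invariant holds trivially before the first iteration, since nothing is colored. Suppose the partial coloring is proper at the start of iteration $t$, and let $v$ be a vertex colored during this iteration. Its color $c_t(v)$ lies in $\cP_t(v)$, so it differs from the color of every neighbor colored before iteration $t$. If a neighbor $u$ is also colored in iteration $t$, then both were active and uncolored. The acceptance rule requires $c_t(v)$ to differ from every proposal of active uncolored neighbors, so $c_t(v)\ne c_t(u)$. Hence no monochromatic edge is created.

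\textbf{Progress.} Fix the history up to the start of the iteration; this determines $U$, the colors of $V\setminus U$, and hence every palette $\cP_t(w)$. Fix $v\in U$. Then $v$ remains uncolored if either it is inactive, which has probability $1-p$, or it is active and its proposal collides with the proposal of some active neighbor $u\in N(v)\cap U$. Condition on $v$ being active and on the activations and proposals of all other vertices; these are independent of $v$'s uniform choice from $\cP_t(v)$. Each active neighbor $u$ rules out at most one color, namely $c_t(u)$, so
\[
\Prob[\text{collision}\mid v\text{ active},\ \text{rest}]\le \frac{A_v}{|\cP_t(v)|},
\]
where $A_v$ is the number of active uncolored neighbors. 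Moreover, $|\cP_t(v)|\ge \numclr-|N(v)\setminus U|$.

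The step I expect to need the most care is turning this into the bound $p\Delta/\numclr$. Taking the expectation over activations (independent of $v$'s own choice) gives $\Exp[A_v]=p\,|N(v)\cap U|$. Write $k=|N(v)\cap U|$ and $j=|N(v)\setminus U|$, so $k+j\le\Delta$. The collision probability is then at most
\[
\frac{pk}{\numclr-j}\le \frac{pk}{\numclr-\Delta+k}.
\]
To compare this with $p\Delta/\numclr$, note that $x\mapsto x/(\numclr-\Delta+x)$ is increasing in $x$, so the worst case is $k=\Delta$, which gives $p\Delta/\numclr$ exactly. Therefore
\[
\Prob[v\text{ uncolored after}\mid\text{history}]\le (1-p)+p\cdot\frac{p\Delta}{\numclr}=\theta.
\]

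One subtlety in writing this up is that the bound must hold for proposals drawn from neighbors' palettes rather than from a fixed palette. This is handled by conditioning on the neighbors' proposals first and using only that each contributes at most one forbidden color. Independence between $v$'s activation and choice and the neighbors' randomness is given by the algorithm, since all of these are fresh random bits in iteration $t$.
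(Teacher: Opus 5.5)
Your proof is correct and follows essentially the same route as the paper. Properness comes from the palette exclusion plus the rule that simultaneous equal proposals are rejected, and progress comes from a union bound over uncolored neighbors, each contributing at most $p/|\cP_t(v)|$, combined with $|\cP_t(v)|\ge s+\deg_U(v)$ and monotonicity of $x/(s+x)$. The only cosmetic difference is that you condition on the neighbors' proposals and count forbidden colors, whereas the paper computes each neighbor's conflict probability directly as $p\,|\cP_t(u)\cap\cP_t(v)|/(|\cP_t(u)|\,|\cP_t(v)|)\le p/|\cP_t(v)|$.
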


\begin{proof}
Let $U$ be the uncolored set at the beginning of the iteration.  Denote
$\deg_U(v)=|N(v)\cap U|$ and $a_v=|\cP_t(v)|$.  At most
$\Delta-\deg_U(v)$ colored neighbors exclude colors.  Hence
\begin{equation}
\label{eq:slack-available}
a_v\geq\numclr-(\Delta-\deg_U(v))
=s+\deg_U(v).
\end{equation}
An active vertex proposes a color unused by its colored neighbors.  If two
adjacent active vertices make the same proposal, neither is accepted.
Therefore simultaneous acceptance preserves properness.

Condition on the history and on $v$ being active.  For an uncolored neighbor
$u$ of $v$, let $\cE_{\mathrm{confl}}(u)$ be the event that $u$ is active and
$c_t(u)=c_t(v)$.  Fresh activations and proposals are mutually independent,
so Eq.~\eqref{eq:slack-available} gives
% Environment changed from equation to align to restore the first step -- Claude
\begin{align}
\Prob[\cE_{\mathrm{confl}}(u)\mid
\text{history},v\text{ active}]
&~=~ 
\Prob[u\text{ is active}]\cdot\!\!\!\sum_{i\in \cP_t(u)\cap \cP_t(v)}\!\!\!
\Prob[c_t(v)=i] \cdot \Prob[c_t(u)=i]
\nonumber\\
&~=~ p\sum_{i\in\cP_t(u)\cap\cP_t(v)}\frac{1}{a_ua_v}
~\leq~\frac{p}{a_v}
~\leq~ \frac{p}{s+\deg_U(v)}.
\label{eq:single-neighbor-conflict}
\end{align}
A union bound over the uncolored neighbors yields
\begin{equation*}
\Prob[v\text{ has a conflict}\mid
\text{history},v\text{ active}]
~\leq~
\frac{p\deg_U(v)}{s+\deg_U(v)}
~\leq~ \frac{p\Delta}{s+\Delta}
~=~ \frac{p\Delta}{\numclr}~.
\end{equation*}
The second inequality holds because $x/(s+x)$ is increasing for $x\geq0$.
Thus $v$ is colored with 
%conditional 
%Previous version was "unconditional"
probability at least
$p(1-p\Delta/\numclr)$, and the probability that it remains uncolored
is at most $1-p+p^2\Delta/\numclr=\theta$.
\end{proof}

The progress bound can be multiplied across iterations without an
independence assumption: one conditions successively on the complete history
at the beginning of each iteration.  It follows that a fixed vertex remains
uncolored after all $T$ iterations with probability at most
$\theta^T\leq n^{-(\lambda+3)}$.  A union bound now gives the following
termination guarantee.

\begin{corollary}
%[Termination]
\label{cor:slack-greedy-termination}
Algorithm~\ref{alg:slack-greedy} colors every vertex with probability at
least $1-n^{-(\lambda+2)}$.
\end{corollary}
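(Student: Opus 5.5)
The plan is to multiply the one-iteration bound of Lemma~\ref{lem:slack-greedy-progress} across all $T$ iterations for a single fixed vertex $v$, and then take a union bound over the $n$ vertices. The properness invariant from the same lemma means that a vertex leaves $U$ only with a permanent color. Hence the event that Algorithm~\ref{alg:slack-greedy} colors every vertex is exactly the event that $U=\emptyset$ after iteration $T$.

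Fix $v$ and, for $t=0,\ldots,T$, let $\cF_t$ be the $\sigma$-algebra generated by the complete execution history through the end of iteration $t$ (all activations, proposals and acceptances). Let $E_t$ be the event that $v$ is still uncolored after iteration $t$; note $E_t\in\cF_t$ and $E_t\subseteq E_{t-1}$. On $E_{t-1}$, Lemma~\ref{lem:slack-greedy-progress} gives $\Prob[E_t\mid\cF_{t-1}]\leq\theta$ almost surely, since the lemma holds for every history in which $v$ is uncolored at the start of iteration $t$. Then
\[
\Prob[E_t]=\Exp\bigl[\mathbf 1_{E_{t-1}}\Prob[E_t\mid\cF_{t-1}]\bigr]\leq\theta\,\Prob[E_{t-1}],
\]
and induction from $\Prob[E_0]\leq1$ yields $\Prob[E_T]\leq\theta^T$. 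This conditioning argument is the only place requiring care: the events for different iterations are not independent, because the palettes $\cP_t(\cdot)$ and the set $U$ depend on the past. The lemma's bound is uniform over histories, however, and that uniformity is exactly what the tower property needs.

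Next, by Eq.~\eqref{eq:slack-iteration-count}, $T\geq(\lambda+3)\ln n/\ln(1/\theta)$, so $\theta^T=e^{-T\ln(1/\theta)}\leq n^{-(\lambda+3)}$. This uses $0<\theta<1$, which was verified when $T$ was shown to be well defined: $\theta<3/4$ or $\theta\leq1/2$, and $\theta\geq\Delta/\numclr>0$ or $\theta\geq 1-p>0$ in the two cases. A union bound over the $n$ vertices then bounds the probability that some vertex remains in $U$ by $n\cdot n^{-(\lambda+3)}=n^{-(\lambda+2)}$. This proves the corollary.
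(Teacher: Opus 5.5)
Your proof is correct and follows the paper's own argument. The paper also chains the history-uniform bound of Lemma~\ref{lem:slack-greedy-progress} by successive conditioning to get $\theta^T\leq n^{-(\lambda+3)}$ for a fixed vertex, then takes a union bound over the $n$ vertices; you additionally write out the tower-property step and the check $0<\theta<1$, which the paper leaves implicit.
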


%%%%%%%%%%%%
\subsection{Frequency Analysis}

An accepted proposal of color $i$ contributes one vertex to $\freq(i)$,
whereas a rejected proposal contributes nothing.  Hence the total number of
proposals of $i$ is an upper bound on its final frequency.
Define
\begin{equation}
\label{eq:slack-cap-parameters}
\mu_{p,\numclr}=\frac{pnT}{\numclr-\Delta},
\qquad
R_{\mathrm{prop}}=(\lambda+3)\ln n+\ln\numclr,
\end{equation}
and put
\begin{equation}
\label{eq:slack-frequency-cap}
\Psi_{p,\numclr}=
\min\!\left\{
n,
\left\lceil
\mu_{p,\numclr}
+\sqrt{2\mu_{p,\numclr}R_{\mathrm{prop}}}
+\frac{2R_{\mathrm{prop}}}{3}
\right\rceil
\right\}.
\end{equation}

The proposal process gives the following simultaneous cap.

\begin{lemma}
%[Frequency cap]
\label{lem:slack-greedy-frequency}
With probability at least $1-n^{-(\lambda+3)}$, every color used by
Algorithm~\ref{alg:slack-greedy} has frequency at most
$\Psi_{p,\numclr}$.
\end{lemma}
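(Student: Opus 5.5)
Fix a color $i\in[\numclr]$.  The frequency of $i$ is at most the total number of proposals of $i$ over all $T$ iterations, since each accepted proposal adds exactly one vertex and nothing else adds vertices.  So it suffices to show that, with probability at least $1-n^{-(\lambda+3)}/\numclr$, the number of proposals of $i$ is at most $\mu_{p,\numclr}+\sqrt{2\mu_{p,\numclr}R_{\mathrm{prop}}}+2R_{\mathrm{prop}}/3$.  Since frequencies are integers, this gives the ceiling in Eq.~\eqref{eq:slack-frequency-cap}.  The minimum with $n$ is trivial.  A union bound over the $\numclr$ colors then finishes, since $\numclr e^{-R_{\mathrm{prop}}}=n^{-(\lambda+3)}$.

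To bound the proposals I would index trials by pairs $(t,v)$ with $t\in[T]$ and $v\in V$, ordered lexicographically, giving $N=nT$ trials.  Let $Y_{t,v}$ be the indicator that $v$ is uncolored at the start of iteration $t$, becomes active, and proposes $i$.  Take $\cF$ to be the filtration generated by the full history before iteration $t$ together with the activation and proposal choices of the earlier vertices in iteration $t$.  The key estimate is
\[
\Prob[Y_{t,v}=1\mid\cF]\leq\frac{p}{\numclr-\Delta}.
\]
This holds because, given the history up to iteration $t$, whether $v\in U$ is already determined.  If $v\in U$, then its activation is an independent $p$-coin, and its proposal is uniform on $\cP_t(v)$.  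That palette is fixed by the history, independent of the other vertices' fresh choices in the same iteration, and has size at least $s+\deg_U(v)\geq s=\numclr-\Delta$ by Eq.~\eqref{eq:slack-available}.  If $v\notin U$, the conditional probability is zero.  So we may take $q_j=p/(\numclr-\Delta)$ deterministically, and $\mu=\sum_j q_j=pnT/(\numclr-\Delta)=\mu_{p,\numclr}$.

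Applying Eq.~\eqref{eq:adaptive-bernstein} of Fact~\ref{fact:adaptive-bernoulli} with $x=R_{\mathrm{prop}}$ gives the desired tail bound $e^{-R_{\mathrm{prop}}}=n^{-(\lambda+3)}/\numclr$ for color $i$.

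The main point needing care is the conditioning step.  Within an iteration, $\cP_t(v)$ depends only on colors fixed before the iteration, and the other vertices' choices in the same iteration are independent of $v$'s fresh randomness.  This is what lets the adaptive bound apply with no independence assumptions across iterations.  The bound is deliberately crude in ignoring that $v$ may already be colored, since an upper bound suffices.
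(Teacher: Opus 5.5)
Your proposal is correct and follows the paper's proof essentially step for step. It uses the same lexicographic exposure of the $nT$ vertex--iteration trials, with zero contribution from already-colored vertices, and the same deterministic bound $q_j=p/(\numclr-\Delta)$ plugged into Fact~\ref{fact:adaptive-bernoulli} with $x=R_{\mathrm{prop}}$, followed by $\freq(i)\leq Z_i$ and a union bound over the $\numclr$ colors. Your integrality remark is unnecessary, since $\freq(i)$ at most the threshold already implies $\freq(i)$ at most its ceiling, but it is harmless.
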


\begin{proof}
Fix a color $i$, and let $Z_i$ denote its total number of proposals, including
rejected proposals.  We define the filtration used to expose these proposals.
Order the $nT$ potential vertex--iteration trials lexicographically by
iteration and by an arbitrary fixed vertex order.  At the start of an
iteration, the current uncolored set and every available palette are already
determined by earlier iterations.  Within that iteration, expose the fresh
activation and, when active, the proposal of each vertex in the fixed order.
Acceptance decisions are determined only after all proposals of the
iteration have been exposed, and are added to the history before the next
iteration begins.  A trial corresponding to an already colored vertex is
padded with the value zero.

Let $Y_j$ indicate that trial $j$ proposes color $i$, and let
$\cF_j$ contain the information exposed through that trial together with all
acceptance decisions of completed iterations.  Conditioned on
$\cF_{j-1}$, the probability that trial $j$, corresponding to a vertex $v$,
proposes $i$ is either zero or
$p/|\cP_t(v)| \leq p/(\numclr-\Delta)$.
Thus Fact~\ref{fact:adaptive-bernoulli} applies with
$q_j=p/(\numclr-\Delta)$ for all $nT$ trials and total budget
$\mu_{p,\numclr}$.  It follows that
\[
\Prob\!\left[
Z_i>
\mu_{p,\numclr}
+\sqrt{2\mu_{p,\numclr}R_{\mathrm{prop}}}
+\frac{2R_{\mathrm{prop}}}{3}
\right]
\leq e^{-R_{\mathrm{prop}}}.
\]

Every accepted proposal of $i$ is counted by $Z_i$, so
$\freq(i)\leq Z_i$.  We also have the deterministic bound
$\freq(i)\leq n$.  Equation~\eqref{eq:slack-frequency-cap} and a union bound
over all palette colors give
\[
\Prob[\text{some color }i\text{ satisfies }\freq(i)>\Psi_{p,\numclr}]
\leq\numclr e^{-R_{\mathrm{prop}}}
=n^{-(\lambda+3)}.
\]
The lemma follows.
\end{proof}

\begin{remark}[The cost of a deterministic proposal budget]
\label{rem:budget}
The budget $\mu_{p,\numclr}$ charges every vertex a proposal at every one of
the $T$ iterations.  In expectation, far fewer proposals occur: by
Lemma~\ref{lem:slack-greedy-progress}, a vertex is still uncolored at the
start of iteration $t$ with probability at most $\theta^{t-1}$, so the total
number $Z_i$ of proposals of a fixed color $i$ satisfies
\[
\Exp[Z_i]
\;\leq\;
\frac{p}{\numclr-\Delta}\sum_{t\geq1}n\,\theta^{t-1}
\;=\;
\frac{pn}{(\numclr-\Delta)(1-\theta)}
\;\leq\;
\frac{2n}{\numclr-\Delta},
\]
since $1-\theta=p(1-p\Delta/\numclr)\geq p/2$.  Thus
$\Exp[Z_i]=O(n/(\numclr-\Delta))$, independently of $T$ --- a factor
$\Theta(pT)$ below $\mu_{p,\numclr}$.  The factor $T$ in
$\Psi_{p,\numclr}$ is therefore an artifact of the deterministic budget:
Fact~\ref{fact:adaptive-bernoulli} requires deterministic bounds $q_j$,
whereas the geometric decay of the uncolored set is a random event that the
conditioning has already exposed, so the analysis cannot take credit for it.
Converting this expectation benchmark into a high-probability cap --- say,
via martingale inequalities with random predictable variance, combined with
concentration of $\sum_t|U_t|$ --- would improve the caps of
Corollaries~\ref{cor:slack-greedy-fractional}
and~\ref{cor:slack-greedy-near-exact} by a $\Theta(\lg n)$ factor.  We leave
this as an open question.
\end{remark}

%%%%%%%%%%%%
\subsection{Main Guarantee and Tradeoff Points}

Combining progress with the proposal cap yields the general guarantee.

\begin{theorem}
\label{thm:slack-greedy}
Let $n\geq2$, $\lambda\geq1$, and let $\numclr>\Delta$ be an integer whose
color identifiers use $O(\lg n)$ bits.  Algorithm
~\ref{alg:slack-greedy} computes, with probability at least
$1-n^{-\lambda}$, a proper $\Psi_{p,\numclr}$-upper-balanced coloring using at
most $\numclr$ colors.  Its expected sequential running time is
$O((m+n)T)$, and its running time in the \CONGEST\ model is $O(T)$, where $T=O((\lambda+1)\lg n)$.
\end{theorem}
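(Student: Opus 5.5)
The proof combines the three ingredients already established and then accounts for the running time; there is essentially no new probabilistic work.

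\textbf{Correctness and failure probability.} Define two bad events:
\begin{itemize}
\item $\cE_1$: some vertex is still uncolored after $T$ iterations. By Corollary~\ref{cor:slack-greedy-termination}, $\Prob[\cE_1]\leq n^{-(\lambda+2)}$.
\item $\cE_2$: some color has frequency exceeding $\Psi_{p,\numclr}$. By Lemma~\ref{lem:slack-greedy-frequency}, $\Prob[\cE_2]\leq n^{-(\lambda+3)}$.
\end{itemize}
A union bound gives failure probability at most $n^{-(\lambda+2)}+n^{-(\lambda+3)}\leq n^{-\lambda}$, using $n\geq2$.

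Outside $\cE_1$, no vertex declares \textsf{failure}. The output is proper by the invariant of Lemma~\ref{lem:slack-greedy-progress}: the partial coloring stays proper at every iteration, so the final total coloring is proper. It uses at most $\numclr$ colors, since every proposal is drawn from $\cP_t(v)\subseteq[\numclr]$. Outside $\cE_2$, every used color has frequency at most $\Psi_{p,\numclr}$, which is exactly the upper-balanced property.

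\textbf{\CONGEST\ rounds.} Each iteration needs $O(1)$ rounds:
\begin{itemize}
\item Activation is local.
\item Each vertex computes $\cP_t(v)$ from the colors its neighbors announced when they were colored; no extra communication is needed.
\item Sending a proposal takes one round, with a message of $O(\lg n)$ bits by the assumption on color identifiers.
\item Conflict testing is local.
\item Announcing a newly acquired color takes one more round.
\end{itemize}
The final failure declaration is local. Hence the total is $O(T)$ rounds, and $T=O((\lambda+1)\lg n)$ by Eq.~\eqref{eq:slack-T-asymptotic}.

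\textbf{Sequential time.} This is the only step needing care: sampling uniformly from $\cP_t(v)$ must not cost $\Theta(\numclr)$ per proposal, because $\numclr$ may be much larger than $\deg(v)+1$. I would use rejection sampling.
\begin{itemize}
\item Draw uniformly from $[\numclr]$ and reject colors held by colored neighbors, checked against a hash set or marked array of neighbor colors built in $O(\deg(v))$ time.
\item Since $|\cP_t(v)|\geq s+\deg_U(v)$ by Eq.~\eqref{eq:slack-available}, the acceptance probability is at least $(\numclr-\deg(v))/\numclr$.
\item The expected number of draws is at most $\numclr/(\numclr-\deg(v))\leq \numclr/s$. This is not $O(1)$ when $s$ is small, so plain rejection sampling is insufficient.
\end{itemize}
Instead, I would sample an index $k$ uniformly in $\{1,\ldots,|\cP_t(v)|\}$ and map it to the $k$-th available color. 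To do this, sort the at most $\deg(v)$ excluded colors (or use a bucket array) and skip over them, which costs $O(\deg(v)+1)$ per active vertex per iteration; the sort can be replaced by counting tricks, or the $\lg$ factor absorbed, if needed. Summed over vertices, each iteration then costs $O(m+n)$ for palettes, proposals, conflict scans and color announcements. Over $T$ iterations this gives $O((m+n)T)$, expected because of any randomized steps in sampling. This matches the claimed bound.
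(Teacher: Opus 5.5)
Your correctness argument (union bound over Corollary~\ref{cor:slack-greedy-termination} and Lemma~\ref{lem:slack-greedy-frequency}, properness from the invariant, colors drawn from $[\numclr]$) is the same as the paper's. So is your count of two neighbor rounds per iteration. The gap is in the sequential-time part.

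You discard rejection sampling because $\numclr/(\numclr-\deg(v))\le\numclr/s$ is not $O(1)$. But $O(1)$ draws is not what is needed. The budget at $v$ is $O(\deg(v)+1)$ per iteration. For the $r\le\deg(v)$ excluded colors one has $\numclr-r\ge\numclr-\Delta\ge 1$, so
\[
\frac{\numclr}{\numclr-r}=1+\frac{r}{\numclr-r}\le r+1\le\deg(v)+1 .
\]
With a hash table of neighbor colors (expected $O(1)$ per lookup), this gives expected $O(\deg(v)+1)$ work at $v$ per iteration. That is $O(m+n)$ per iteration and $O((m+n)T)$ overall, which is exactly the paper's argument. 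Loosening to $\numclr/s$ discarded the dependence on $\deg(v)$ that makes the sum over vertices $O(m+n)$.

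The method you adopt instead does not establish the bound as written. Mapping a uniform index to the $k$-th available color requires the excluded colors in sorted order, and none of your options gives $O(\deg(v)+1)$ per vertex:
\begin{itemize}
\item A comparison sort costs $\Theta(\deg(v)\lg\deg(v))$.
\item A bucket array indexed by colors may need a scan of length $\Theta(\numclr)$, and $\numclr$ may be polynomially larger than $\deg(v)$.
\item ``Absorbing the $\lg$ factor'' gives $O((m+n)T\lg\Delta)$, which is weaker than the claimed $O((m+n)T)$.
\end{itemize}
The unspecified ``counting tricks'' could be made to work. For example, one global radix sort of all vertex/neighbor-color pairs per iteration runs in $O(m+n)$ time, since colors have $O(\lg n)$ bits. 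As stated, though, the $O(\deg(v)+1)$ claim is unsupported, and the rejection sampler you dismissed already suffices.
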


\begin{proof}
Corollary~\ref{cor:slack-greedy-termination} gives proper completion with
failure probability at most $n^{-(\lambda+2)}$, and
Lemma~\ref{lem:slack-greedy-frequency} gives the simultaneous cap with failure
probability at most $n^{-(\lambda+3)}$.  For $n\geq2$, their sum is at most
$n^{-\lambda}$.

Every iteration uses one neighbor round for proposals and one for acceptance
announcements.  Each message contains an activation bit or a color identifier
and therefore uses $O(\lg n)$ bits.  No global termination test is required.

For the sequential simulation, store the distinct colors of the already
colored neighbors of $v$ in a hash table.  Repeatedly draw a uniform color
from $[\numclr]$ until a color outside the table is obtained.  This returns a
uniform member of $\cP_t(v)$.  If the table contains $r\leq\deg(v)$ colors,
the expected number of draws is
\[
\frac{\numclr}{\numclr-r}
=1+\frac{r}{\numclr-r}
\leq r+1
\leq\deg(v)+1.
\]
Constructing the table and processing all conflicts and announcements takes
expected $O(\deg(v)+1)$ time at $v$ per iteration.  Summing over all vertices
gives expected $O(m+n)$ time per iteration.  The theorem follows.
\end{proof}

\begin{remark}[Balance in expectation]
\label{rem:balance-expectation}
Algorithm~\ref{alg:slack-greedy} treats the palette symmetrically: available
palettes are complements of neighbor colors, proposals are uniform, and
acceptance is a pure equality test.  Hence relabeling the colors by any
permutation $\pi$ of $[\numclr]$ maps executions to equally likely
executions whose colorings are related by $\pi$, so $\pi\circ\clr$ and
$\clr$ have the same law.  Consequently, all palette colors have the same
expected frequency,
\[
\Exp[\freq(i)]=\frac{n-\Exp[|U_T|]}{\numclr}
\;\geq\;\frac{n-n^{-(\lambda+2)}}{\numclr}
\qquad\text{for every }i\in[\numclr],
\]
where $U_T$ is the uncolored set at termination (the estimate uses
$\Exp[|U_T|]\leq n\cdot\theta^T\leq n^{-(\lambda+2)}$).  The one-sidedness of
Theorem~\ref{thm:slack-greedy} thus reflects the absence of a concentration
argument below the mean, not any systematic imbalance of the process itself.
\end{remark}

We next express the general result in terms of a fractional palette slack.

\begin{corollary}
%[Fractional palette slack]
\label{cor:slack-greedy-fractional}
Fix $0<\rho\leq1$ and put
$\numclr=\Delta+\lceil\rho(\Delta+1)\rceil$.  With probability at least
$1-n^{-\lambda}$, Algorithm~\ref{alg:slack-greedy} computes a proper
upper-balanced coloring with
$\numclr\leq(1+\rho)(\Delta+1)$ colors and frequency cap
\[
O\!\left(
(\lambda+1)\left(\frac{\sigma\lg n}{\rho}+\lg n\right)
\right).
\]
It runs in $O((\lambda+1)\lg n)$ rounds in the \CONGEST\ model and in expected $O((m+n)(\lambda+1)\lg n)$ sequential time.
\end{corollary}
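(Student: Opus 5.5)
The corollary follows from Theorem~\ref{thm:slack-greedy} once the general cap $\Psi_{p,\numclr}$ is evaluated at $\numclr=\Delta+\lceil\rho(\Delta+1)\rceil$. The plan has three parts: check the hypotheses, bound the palette size and the slack, and then simplify $\mu_{p,\numclr}$ and $R_{\mathrm{prop}}$.

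\emph{Hypotheses and palette size.} Since $\rho>0$, the slack $s=\numclr-\Delta=\lceil\rho(\Delta+1)\rceil$ is at least $1$, so $\numclr>\Delta$ is an integer. Since $\rho\leq1$, we have $\numclr\leq 2\Delta+1\leq n^{O(1)}$ (using $\Delta<n$), so color identifiers fit in $O(\lg n)$ bits. For the palette bound, $\lceil\rho(\Delta+1)\rceil\leq\rho(\Delta+1)+1$, which gives $\numclr\leq(1+\rho)(\Delta+1)$. Theorem~\ref{thm:slack-greedy} then supplies properness, the success probability $1-n^{-\lambda}$, the $O(T)$ \CONGEST\ round bound and the $O((m+n)T)$ expected sequential time. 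By Eq.~\eqref{eq:slack-T-asymptotic}, $T=O((\lambda+1)\lg n)$.

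\emph{Evaluating the cap.} Use $p\leq1$ and $s\geq\rho(\Delta+1)$ to obtain
\[
\mu_{p,\numclr}=\frac{pnT}{s}\leq\frac{nT}{\rho(\Delta+1)}=\frac{\sigma T}{\rho}=O\!\left(\frac{(\lambda+1)\sigma\lg n}{\rho}\right).
\]
Next, $R_{\mathrm{prop}}=(\lambda+3)\ln n+\ln\numclr=O((\lambda+1)\lg n)$, because $\numclr\leq 2n$.

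To control the cross term, apply AM--GM: $\sqrt{2\mu R}\leq\mu+R$. Hence
\[
\Psi_{p,\numclr}\leq 2\mu_{p,\numclr}+\tfrac{5}{3}R_{\mathrm{prop}}+1,
\]
and the extra ceiling $+1$ is absorbed because $\lg n\geq1$ for $n\geq2$. Combining these estimates gives the stated cap $O((\lambda+1)(\sigma\lg n/\rho+\lg n))$.

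The only step needing real care is this last one: keeping $T$ inside $\mu$ while $p$ varies between the regimes $\numclr<2\Delta$ and $\numclr\geq2\Delta$. Bounding $p$ crudely by $1$ sidesteps the case split. The remaining points are the routine conversion of the ceiling in $s$ and the use of $\ln$ versus $\lg$ within the constants.
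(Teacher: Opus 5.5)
Your proposal is correct and follows essentially the paper's argument: bound $s\geq\rho(\Delta+1)$ and $p\leq1$ to get $\mu_{p,\numclr}\leq\sigma T/\rho$, note $R_{\mathrm{prop}}=O((\lambda+1)\lg n)$, and absorb the cross term by AM--GM. The paper uses the slightly sharper $\sqrt{2\mu R}\leq\mu+R/2$, which only changes constants, and it leaves implicit the hypothesis checks for Theorem~\ref{thm:slack-greedy} and the palette bound $\numclr\leq(1+\rho)(\Delta+1)$ that you verify explicitly.
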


\begin{proof}
The palette slack satisfies
$s\geq\rho(\Delta+1)$.  Equations~\eqref{eq:slack-T-asymptotic}
and~\eqref{eq:slack-cap-parameters} give
\[
\mu_{p,\numclr}
\leq\frac{nT}{\rho(\Delta+1)}
=O\!\left(\frac{(\lambda+1)\sigma\lg n}{\rho}\right),
\qquad
R_{\mathrm{prop}}=O((\lambda+1)\lg n).
\]
Using
$\sqrt{2\mu_{p,\numclr}R_{\mathrm{prop}}}
\leq\mu_{p,\numclr}+R_{\mathrm{prop}}/2$
in Eq.~\eqref{eq:slack-frequency-cap} proves the claim.
\end{proof}

As in Eq.~\eqref{eq:slack-frequency-cap}, the cap of
Corollary~\ref{cor:slack-greedy-fractional} should be read together with the
deterministic bound $\freq(i)\leq n$, which dominates unless $\Delta$
exceeds a constant multiple of $(\lambda+1)\lg n/\rho$.

The following endpoint lets the fractional slack vanish with $n$.

\begin{corollary}
%[Near-exact palette]
\label{cor:slack-greedy-near-exact}
Let $n\geq3$ and $L=\lceil\ln n\rceil$.  There is an absolute constant
$C_0>0$ such that, with probability at least $1-n^{-\lambda}$, Algorithm
~\ref{alg:slack-greedy} computes a proper coloring with
\[
\numclr=\Delta+\left\lceil\frac{\Delta+1}{L}\right\rceil
\leq\left(1+\frac1L\right)(\Delta+1)
\]
colors and
\[
\freq(i)\leq
\min\!\left\{
n,\,
C_0(\lambda+1)(\sigma\lg^2 n+\lg n)
\right\}
\qquad\text{for every used color }i.
\]
Its expected sequential running time is $O((m+n)(\lambda+1)\lg n)$, and its running time in the \CONGEST\ model is $O((\lambda+1)\lg n)$.
\end{corollary}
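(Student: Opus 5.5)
This follows from Theorem~\ref{thm:slack-greedy}. The plan is to specialize it to $\numclr=\Delta+\lceil(\Delta+1)/L\rceil$ and then bound $\Psi_{p,\numclr}$ in the way used for Corollary~\ref{cor:slack-greedy-fractional}. The key observation is that this is exactly the fractional construction with $\rho=1/L$. Since $n\geq3$, we have $L\geq2$, so $0<\rho\leq1/2$ and Corollary~\ref{cor:slack-greedy-fractional} applies directly. It gives the palette bound $\numclr\leq(1+1/L)(\Delta+1)$, the failure probability $n^{-\lambda}$, and both running times. Since $\numclr\leq 2\Delta+1$ and $n\geq\Delta+1$, color identifiers use $O(\lg n)$ bits, so Theorem~\ref{thm:slack-greedy} is applicable. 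Properness and ``at most $\numclr$ colors'' also come from that theorem.

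For the frequency cap, I will redo the estimate with the constants tracked, so as to get an absolute $C_0$. The slack is $s=\lceil(\Delta+1)/L\rceil\geq(\Delta+1)/L$. Here $\numclr<2\Delta$ unless $\Delta$ is tiny, so $p=\numclr/(2\Delta)\leq1$ and $\theta\leq3/4$. By Eq.~\eqref{eq:slack-iteration-count}, $T\leq(\lambda+3)\ln n/\ln(4/3)+1$, which is at most an absolute constant times $(\lambda+1)\lg n$ (using $\lambda\geq1$ and $n\geq3$). Then
\[
\mu_{p,\numclr}\leq\frac{nT}{s}\leq\frac{nTL}{\Delta+1}=\sigma TL,
\]
and $L\leq\ln n+1=O(\lg n)$ for $n\geq3$. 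This gives $\mu_{p,\numclr}=O((\lambda+1)\sigma\lg^2 n)$. Next, $R_{\mathrm{prop}}=(\lambda+3)\ln n+\ln\numclr\leq(\lambda+3)\ln n+\ln(2n)=O((\lambda+1)\lg n)$. Using $\sqrt{2\mu R}\leq\mu+R/2$ and absorbing the ceiling (its $+1$ is at most $\lg n$ since $n\geq3$), we get $\Psi_{p,\numclr}\leq 2\mu_{p,\numclr}+\tfrac{7}{6}R_{\mathrm{prop}}+1\leq C_0(\lambda+1)(\sigma\lg^2n+\lg n)$. The minimum with $n$ is already built into Eq.~\eqref{eq:slack-frequency-cap}.

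I expect the only real care to be in showing that the constants are absolute, independent of $\Delta$, $\lambda$ and $n$. This amounts to checking the bound $\ln(1/\theta)\geq\ln(4/3)$ uniformly in both regimes for $p$, and checking the small-$n$ cases of $L\leq c\lg n$ and $1\leq c\lg n$ for $n\geq3$. Everything else is substitution into previously established bounds.
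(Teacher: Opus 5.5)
Your proposal is correct and follows essentially the same route as the paper. Like the paper, you use $s\geq(\Delta+1)/L$ to get $\mu_{p,\numclr}\leq\sigma TL=O((\lambda+1)\sigma\lg^2 n)$, note $R_{\mathrm{prop}}=O((\lambda+1)\lg n)$, apply $\sqrt{2\mu R}\leq\mu+R/2$ in Eq.~\eqref{eq:slack-frequency-cap}, and inherit the palette, probability and time claims (which you route through Corollary~\ref{cor:slack-greedy-fractional} with $\rho=1/L$), while your explicit constant tracking ($\theta<3/4$ in both regimes, $L\leq2\lg n$, absorbing the ceiling) simply spells out why $C_0$ is absolute.
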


\begin{proof}
Here $s=\numclr-\Delta\geq(\Delta+1)/L$.  Hence
\[
\mu_{p,\numclr}
=\frac{pnT}{s}
\leq\frac{nTL}{\Delta+1}
=\sigma TL
=O((\lambda+1)\sigma\lg^2 n).
\]
Also $R_{\mathrm{prop}}=O((\lambda+1)\lg n)$.  Substitution in
Eq.~\eqref{eq:slack-frequency-cap}, together with
$\sqrt{2\mu R}\leq\mu+R/2$, gives the stated absolute constant $C_0$.
The palette and running-time claims follow from the definitions and
Theorem~\ref{thm:slack-greedy}.
\end{proof}

For fixed $\lambda$, Corollary~\ref{cor:slack-greedy-near-exact} gives an
$o(n)$ cap whenever $\Delta/\lg^2 n$ tends to infinity.  For smaller
$\Delta$, the exact minimum with $n$ remains essential and prevents the
asymptotic expression from being misread as an improvement over the trivial
cap.

For comparison, increasing the palette further gives a sublogarithmic-round
endpoint.

\begin{corollary}
%[High-slack endpoint]
\label{cor:slack-greedy-sublog}
For fixed $\lambda\geq1$ and all sufficiently large $n$, one setting of
Algorithm~\ref{alg:slack-greedy} uses
$O((\Delta+1)\lg n/\lg\lg n)$ colors, has frequency cap
$O(\sigma+\lg n)$, and runs in $O(\lg n/\lg\lg n)$ rounds in the \CONGEST\ model and in expected $O((m+n)\lg n/\lg\lg n)$ sequential time, with probability at least $1-n^{-\lambda}$.
\end{corollary}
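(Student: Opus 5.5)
We will apply Theorem~\ref{thm:slack-greedy} with a palette large enough that the iteration count $T$ drops to $O(\lg n/\lg\lg n)$. Put $L'=\lceil\lg n/\lg\lg n\rceil$ and choose $\numclr=\Delta+\lceil L'(\Delta+1)\rceil$, so that $\numclr=O((\Delta+1)\lg n/\lg\lg n)$; color identifiers still use $O(\lg n)$ bits. Since $\numclr\geq2\Delta$ for large $n$, Eq.~\eqref{eq:slack-parameters} gives $p=1$ and $\theta=\Delta/\numclr\leq 1/L'$. Hence $\ln(1/\theta)\geq\ln L'=\Omega(\lg\lg n)$, and Eq.~\eqref{eq:slack-iteration-count} yields
\[
T\leq\left\lceil\frac{(\lambda+3)\ln n}{\ln L'}\right\rceil=O\!\left(\frac{\lg n}{\lg\lg n}\right)
\]
for fixed $\lambda$. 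The round bound and the expected sequential time $O((m+n)T)$ then follow directly from Theorem~\ref{thm:slack-greedy}, and so does the success probability $1-n^{-\lambda}$.

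Next we bound the cap $\Psi_{p,\numclr}$. With $s=\numclr-\Delta\geq L'(\Delta+1)$ and $p=1$,
\[
\mu_{p,\numclr}=\frac{nT}{s}\leq\frac{\sigma T}{L'}.
\]
Since $T=O(\lg n/\lg\lg n)$ and $L'=\Theta(\lg n/\lg\lg n)$, this gives $\mu_{p,\numclr}=O(\sigma)$, where the constant depends on the fixed $\lambda$. Also $R_{\mathrm{prop}}=(\lambda+3)\ln n+\ln\numclr=O(\lg n)$, because $\numclr\leq n\cdot\mathrm{polylog}$, or more generally because identifiers use $O(\lg n)$ bits. Using $\sqrt{2\mu R}\leq\mu+R/2$ in Eq.~\eqref{eq:slack-frequency-cap} then gives $\Psi_{p,\numclr}=O(\sigma+\lg n)$.

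The only delicate step is matching the constants in $T$ and $L'$ so that their ratio is $O(1)$. This is also why the statement is restricted to fixed $\lambda$ and large $n$. We need $\ln L'\geq c\ln\lg n$ for some absolute $c>0$. This holds for large $n$, since $L'\geq\sqrt{\lg n}$ eventually, and it also ensures $\numclr\geq2\Delta$ so that $p=1$. No other step requires new ideas.
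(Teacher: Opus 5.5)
Your proposal is correct and follows the paper's proof. You choose a slack of order $(\Delta+1)\lg n/\lg\lg n$, so that $p=1$ and $\theta=\Delta/\numclr$ is at most the reciprocal of that multiplier, which gives $T=O(\lg n/\lg\lg n)$, $\mu_{p,\numclr}=nT/s=O(\sigma)$ and $R_{\mathrm{prop}}=O(\lg n)$, and you conclude via Theorem~\ref{thm:slack-greedy} and Eq.~\eqref{eq:slack-frequency-cap}, exactly as the paper does; the only differences are your integer multiplier $L'=\lceil\lg n/\lg\lg n\rceil$ in place of the paper's slack $\lceil(\Delta+1)\ln n/\ln\ln n\rceil$, and your explicit check that $T/L'=O(1)$, which the paper leaves implicit.
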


\begin{proof}
Set
\[
\numclr=\Delta+
\left\lceil\frac{(\Delta+1)\ln n}{\ln\ln n}\right\rceil.
\]
For all sufficiently large $n$, we have $\numclr\geq2\Delta$, so
$p=1$, $\theta=\Delta/\numclr$, and
$T=O(\lg n/\lg\lg n)$.  Moreover,
\[
\mu_{p,\numclr}
=\frac{nT}{\numclr-\Delta}
=O(\sigma),
\qquad
R_{\mathrm{prop}}=O(\lg n).
\]
The claim follows from Theorem~\ref{thm:slack-greedy} and
Eq.~\eqref{eq:slack-frequency-cap}.
\end{proof}

For the palette of Corollary~\ref{cor:slack-greedy-sublog}, the ideal
frequency is $n/\numclr=\Theta(\sigma\lg\lg n/\lg n)$; the cap
$O(\sigma+\lg n)$, stated against $\sigma=n/(\Delta+1)$ for comparability
with Corollaries~\ref{cor:slack-greedy-fractional}
and~\ref{cor:slack-greedy-near-exact}, therefore exceeds the ideal frequency
of its own palette by a factor $\Theta(\lg n/\lg\lg n)$.

%%%%%%%%%%%%
\section{Discussion}
\label{sec:discussion}

The two procedures avoid global counting by paying different prices.
\SamplerBalance\ retains a genuine two-sided guarantee but inherits the
roughly $2\Delta$ palette threshold of its sampling black box.
\SlackGreedy\ approaches $\Delta+1$ colors but controls only the upper tail
of the color-class sizes.  
%In particular, it may create singleton classes and is therefore not near-equitable in the standard two-sided sense.
This one-sidedness is a limitation of our proof rather than an established property of the algorithm: by Remark~\ref{rem:balance-expectation}, every color class has the same
expectation, so what is missing is a concentration argument below the mean,
not balance in expectation.  Whether \SlackGreedy, or a mild variant of it,
is in fact near-equitable in the two-sided sense --- say, in the load regime
$\sigma_{\numclr}=\Omega((\lambda+1)\lg n)$ in which two-sided guarantees
are informative at all --- is an open question.

The near-exact palette of Corollary~\ref{cor:slack-greedy-near-exact} is most
useful in the high-degree regime.  For low-degree graphs, the deterministic
minimum with $n$ is the meaningful part of the bound.  Obtaining a
diameter-independent two-sided guarantee with a palette close to
$\Delta+1$, or obtaining a nontrivial upper cap for all degree regimes,
remains open.
Another open question, discussed in Remark~\ref{rem:budget}, is
whether the $\Theta(\lg n)$ gap between the deterministic proposal budget
$\mu_{p,\numclr}$ and the expected proposal load $O(n/(\numclr-\Delta))$ can
be closed, which would sharpen the caps of
Corollaries~\ref{cor:slack-greedy-fractional}
and~\ref{cor:slack-greedy-near-exact} by a $\lg n$ factor.

Another natural direction is to replace uniform proper-coloring sampling by a
more direct process whose stationary distribution penalizes overloaded
colors.  Such a process might trade some of the current palette slack for
stronger balance, but its local implementation and mixing time would require
additional ideas.

%%%%%%%%%%%%%%%%%%%%%%%%%%%%%%%%%

\end{document}